\documentclass[a4paper,USenglish,cleveref, autoref]{lipics-v2021}

\usepackage{amsmath}
\usepackage{amssymb,amsthm}
\usepackage{stmaryrd}
\usepackage[utf8]{inputenc}
\usepackage[T1]{fontenc}
\usepackage{mathtools}
\usepackage{microtype}
\usepackage{hyperref}
\hypersetup{colorlinks=true}

\usepackage{dsfont}
\usepackage{multicol}

\nolinenumbers

\author{Liron Cohen}{Ben-Gurion University,  Beer-Sheva, Israel}{cliron@bgu.ac.il}{0000-0002-6608-3000}{}

\author{Tomer Samara}{Ben Gurion University, Beer-Sheva, Israel}
{samarato@post.bgu.ac.il}{0009-0005-8644-6832}{}
\authorrunning{Cohen and Samara}

\funding{This work was supported by Grant No. 2876/25 from the Israeli Science Foundation (ISF).}

\newcommand{\ef}{\mathcal{E\!F}}

\newcommand{\Set}{\mathbf{Set}}
\newcommand{\Meas}{\mathbf{Meas}}
\newcommand{\must}{\mathsf{must}}
\newcommand{\Nat}{\mathbb{N}}

\newcommand{\defeq}{\mathrel{:=}}

\newcommand{\MC}{\mathsf{MC}}
\newcommand{\tr}{\mathsf{tr}}

\newcommand{\afterx}[4]{ {\langle \!\!\mathrel{\raisebox{-4pt}{$\diamond$}}}\, #1 \leftarrow #2  {\mathrel{\raisebox{-4pt}{$\diamond$}}\!\!\rangle}^{#4}\,   #3 }
\newcommand{\after}[3]{\afterx{#1}{#2}{#3}{}}
\newcommand{\afteras}[3]{\afterx{#1}{#2}{#3}{a.s}}
\newcommand{\aftermust}[3]{\afterx{#1}{#2}{#3}{must}}

\newcommand{\M}{\mathsf{M}}
\newcommand{\unit}{\eta}

\newcommand{\kleisli}{\mathbin{\gg\!=}}

\newcommand{\OmegaTV}{\Omega}
\newcommand{\diamondop}{\Diamond}

\newcommand{\maybe}[1]{#1_{\bot}}

\newcommand{\Code}{\mathsf{A}}     
\newcommand{\Ev}{\mathsf{E}}       

\newcommand{\entails}[3]{#1 \xrightarrow{#2} #3}
\newcommand{\xle}[1]{\xrightarrow{#1}}

\newcommand{\meas}{\mathrm{meas}}

\title{Evidence-Tracked Tape Semantics for Probabilistic Computation}

\begin{document}
\maketitle

\begin{abstract}
A standard intensional account of probabilistic computation represents a randomized program as a deterministic computation that consumes an explicit random tape. This yields a two-layer perspective: an intensional layer that makes reuse of randomness and correlation visible, and an extensional layer obtained by interpreting tapes under a chosen probability measure.
We develop an evidence-tracked tape semantics using the monadic-core-to-evidenced-frame pipeline (and its induced realizability tripos), obtaining a higher-order logic in which entailments are witnessed by uniform evidence transformers. Quantitative statements are recovered by interpretation: once a tape measure is fixed, probabilities and expectations arise by extracting numerical summaries from tape-indexed predicates, and entailments yield sound inequalities, with an almost-sure quotient supporting probability-one reasoning. We also study intensional principles that are lost at the level of laws, including proof-relevant transport along realizable tape-rewiring maps and a canonical splitting discipline for stream tapes enforcing independent draws. Finally, we relate tape-based reasoning to an extensional law semantics via pushforward, isolating a probability-one must abstraction as a sound summary of tape-based proofs.

\ccsdesc[500]{Theory of computation~Semantics and reasoning}
\ccsdesc[500]{Theory of computation~Probabilistic computation}

\keywords{probabilistic programming, random tapes, realizability, evidenced frames, program logic, monadic combinatory algebras}

\end{abstract}

\section{Introduction}

Randomized computation is now a standard part of programming practice, from cryptographic constructions and randomized algorithms to probabilistic programming, e.g.,~\cite{BonehShoup2020, Gordon2014, Saini2022, ProbSen2016}.
This has led to a large body of semantic and proof theoretic techniques for specifying and verifying probabilistic programs.
A recurring distinction in this landscape is between reasoning extensionally, by the output law (that is, output distribution) of a program, and reasoning intensionally, by how a program consumes randomness.

Most probabilistic program logics internalize quantitative specifications:
assertions speak about event probabilities, expectations, or real-valued preconditions, and proof rules are given in weakest-preexpectation or quantitative Hoare style (including relational variants), e.g.,~\cite{Barthe2013, Corin2006, Kaminski2016, Kozen1981, McIverMorgan05}.
These approaches are well suited for bounds and naturally align with law-based semantics.
A recent example is Probabilistic Outcome Logic~\cite{OutcomeLogic, zilberstein2026probabilistic}, which reasons about properties of outcome sets and supports algebraic reasoning for probabilistic and randomized nondeterministic constructs.

A different, equally classical viewpoint models randomness as an explicit input.
A probabilistic program is a deterministic function that consumes a random seed, a stream of bits, or a random tape~\cite{DeterminisitcStreamSample2023, Gianantonio24CCC, ProbSen2016}.
This viewpoint is convenient for modelling implementations, reuse of randomness, correlation, and deterministic transformations of the random source.
For example, drawing one random bit once and reusing it twice is observably different from drawing two independent bits, even if both programs may induce the same marginal distribution on a single output~\cite{Barthe2015}.
Such distinctions are invisible once one quotients a program by its law (i.e., the output distribution obtained by pushing forward a fixed tape measure through the program), but they are central to compilation and transformation arguments where the random source is reshaped.

Working with an explicit random tape yields a built-in separation of concerns: the operational semantics is measure-agnostic, and quantitative statements arise only after fixing a measure.
Much of the verification literature focuses on internal quantitative reasoning over laws, abstracting away intensional structure of randomness consumption~\cite{McIverMorgan05, ProbSen2016}.
Conversely, tape-based semantics is often used operationally, while the link from tape-indexed reasoning to probabilistic statements is handled outside the logic.
Our goal is a clean evidence-tracked account of tape-based computation with principled bridge results that make this link explicit.

Concretely, the paper develops an evidence-tracked higher-order logic whose entailments are uniform tape indexed refinements.
A proof of an entailment is witnessed by a single evidence transformer that works uniformly for every tape.
Tracking evidence makes entailment proof-relevant: an entailment is not just a relation between predicates, but is witnessed by a concrete code transformer mapping realizers of the premise to realizers of the conclusion.
This is essential here because key reasoning principles, such as tape rewiring and tape splitting, correspond to transformations of tape access. 
At the semantic level, these principles act by
reindexing computations and predicates, and at the evidenced-frame level, they require compatible
translations of evidence codes. In stronger concrete language instances, such translations may
further be internalized by compiler-like codes.
Probabilistic claims are obtained by an external extraction step, after fixing a probability measure on tapes.
This is a deliberate design choice: it keeps internal reasoning intensional and measure agnostic, while allowing the same tape level proof to be instantiated with different probabilistic readings.

Technically, we build on an established pipeline for obtaining models of higher-order intuitionistic logic via monadic cores, evidenced frames and realizability triposes~\cite{CohenGrunfeldKirstMiquey:FSCD25,Cohen2021}.
An evidenced frame~\cite{Cohen2021} packages a notion of computation together with a preorder of predicates and a notion of evidence that tracks entailment.
Applying a uniform construction to an evidenced frame yields a tripos, and hence a model of higher-order logic whose entailments carry explicit evidence.
In turn, the monadic combinatory algebra (MCA)  framework~\cite{CohenGrunfeldKirstMiquey:FSCD25} obtains  evidenced frames from monadic cores, built from a monadic combinatory algebra together with a choice of truth values and a suitable modality.
In this paper we take this pipeline as a tool.
Our novelty is not the generic construction, but its probabilistic instantiation for tape consuming computation and the resulting bridge theorems that connect uniform entailments to ordinary probability statements after interpretation.

In our tape setting, computations with outputs in $X$ are  measurable functions $m : R \to \maybe X$ over a measurable tape space $(R,\Sigma)$, where $X_{\bot} \defeq X \uplus \{\bot\}$ (i.e., $X$ with an added divergence value). 
Predicates are tape indexed truth values, that is, functions $R \to [0,1]$, ordered pointwise.
Entailment is therefore tape by tape: an entailment is witnessed by a single evidence transformer that works uniformly for every tape.
This makes the logic intensional and correlation aware: sequential composition threads the same tape through the continuation, so reuse of randomness is visible rather than erased.

To recover numeric probabilistic statements, we fix a probability measure $\rho$ on $(R,\Sigma)$ and apply extraction maps.
For instance, expectation turns a tape-indexed truth value $f : R \to [0,1]$ into a number $\mathbb E_\rho[f]$, and indicator predicates yield event probabilities.
We use the standard expectation interpretation to read quantitative consequences of our tape-level proof principles once a tape measure is fixed.
We also introduce ana almost-sure quotient on predicates that identifies functions agreeing outside a null set, yielding a probability-one fragment that avoids null-set artefacts while remaining compatible with our logical structure.

The tape level supports proof principles about randomness as a resource that are invisible at the level of laws.
We study two representative forms of intensional structure.
First, realizable tape maps let one rewrite the random source while transporting entailments with associated evidence translations, enabling proof reuse across changes of representation.
Second, for stream tapes we develop a canonical splitting construction that provides a disciplined notion of independent draws.
We show how splitting-based reasoning can be translated back into single-tape evidence.
These principles target transformation arguments, where preserving and reshaping the random source is the point.

In many settings one also wants to forget correlations and retain only an extensional law~\cite{giry81, ProbSen2016}.
We therefore study a distribution-level abstraction of the tape semantics.
The most robust fragment that aligns directly with tape reasoning without invoking expectation is the almost-sure, must-style fragment, capturing probability-$1$ properties such as almost-sure termination and support-based safety.
We make precise what information is retained and what is lost by passing from tapes to laws.

\paragraph*{Contributions.}
\begin{itemize}
\item We develop an evidence-tracked, tape-based instantiation of the monadic-core-to-evidenced-frame pipeline, with tape-consuming computations and tape-indexed truth values, where entailments are uniform tape-by-tape refinements witnessed by evidence transformers.

\item We introduce realizable tape maps and show that they transport evidenced entailments together with  corresponding translations on evidence, enabling proof reuse across changes of random-source representation.

\item For bit-stream tapes we study a canonical splitting construction that supports disciplined independent draws, and we show how splitting-based arguments  can be transported back to the single-tape setting.

\item After fixing a probability measure on tapes, we quotient tape truth values by almost-sure equality, yielding a probability-one interface that avoids null-set artefacts and supports robust reasoning up to measure-zero differences.

\item With a tape measure in place, we use expectation and event-probability extraction to interpret tape-indexed truth values numerically, and we show that tape-level entailments yield the corresponding quantitative inequalities under this interpretation.
\end{itemize}

\textbf{Outline.}
Section~\ref{sec:background} recalls the monadic core to evidenced frame to tripos construction used in the paper.
Section~\ref{sec:tape} develops the tape model, extraction principles, and the almost sure quotient.
Section~\ref{sec:tape:streams} treats streams, splitting, and transport along realizable tape maps. 
Section~\ref{sec:dist-final} relates tape computations to laws and formulates the probability-one must abstraction. 
Then, Section~\ref{sec:conc} concludes and discusses future work.

\section{Background}\label{sec:background}

Our goal is to obtain semantic models of higher-order intuitionistic logic in which specifications relate to probabilistic computations.
We do so via an established two-step route from effectful computation to higher-order logic:
(i) from a \emph{monadic core} (an effectful combinatory structure plus a truth-value modality~\cite{CohenGrunfeldKirstMiquey:FSCD25}) one builds an \emph{evidenced frame} (a minimal proof-relevant structure abstracting the entailment rules needed for realizability~\cite{Cohen2021}),
and (ii) from an evidenced frame one obtains a \emph{realizability tripos}, a standard categorical presentation of higher-order intuitionistic logic~\cite{HylandJohnstonePitts80,PittsTriposRetrospect2002}.
In this paper we use evidenced frames as a lightweight interface: they expose \emph{proof-relevant} entailments while staying close to program semantics.
The tripos construction mainly serves as semantic motivation, explaining why evidenced-frame entailments are implications in a higher-order logic and why evidence terms matter.
This section reviews the key components from the 
 monadic-core-to-evidenced-frame pipeline.

\subsection{Monadic cores}
\label{sec:monadic-cores}
Let $M$ be a monad on $\Set$~\cite{Moggi1991}. 
We write $\unit_\Code : \Code \to M(\Code)$ for the unit and $(-)\kleisli(-) : M(\Code) \to (\Code \to M(B)) \to M(B)$ for Kleisli composition (bind), often written $m \kleisli f$ for $m \in M(\Code)$ and $f : \Code \to M(B)$. 
An \emph{$M$-combinatory algebra} (MCA) consists of a set of codes $\Code$ equipped with a monadic application operation
$\cdot : \Code \times \Code \to M(\Code_\bot)$ satisfying a combinatory completeness requirement that ensures enough expressivity to interpret the untyped $\lambda$-calculus. 
(Equivalently, this can formulated via the existence of combinators witnessing the usual $S/K$ structure internalized through $M$ (see~\cite[Prop.~7]{CohenGrunfeldKirstMiquey:FSCD25}).

To turn an MCA into a proof-relevant calculus one has to provide additional ingredients:
\begin{itemize}
\item a truth-value object $(\Omega,\le)$ (a complete Heyting prealgebra), and
\item an \emph{$M$-modality} $\diamondop$-valued in $\Omega$ that internalizes effectful sequencing at the level of truth values.\footnote{We use the  terminology of~\cite{pitts1991evaluation,CohenGrunfeldKirstMiquey:FSCD25}.
A monad $M$ together with an $M$-modality induces a monad over the appropriate fibration~\cite{aguirre2020weakest}, so an $M$-modality is also a modality in the ``monad as a modality'' sense.}
\end{itemize}

\begin{definition}[$M$-modality]
\label{def:m-modality}
An $M$-modality over $\Omega$ is a natural transformation in $X$,
$$
\diamondop_X : M(X) \to (X \to \Omega) \to \Omega,
\qquad \text{and we note }\quad \after{x}{m}{\varphi\left(x\right)}\defeq \diamondop_X\left(m\right)\left(\varphi\right)
$$
satisfying, for all sets $A,B$, all predicates \(\varphi, \varphi_1, \varphi_2:A\to\OmegaTV\),
all Kleisli maps \(f:A\to \M(B)\), all \(a\in A\), and all \(m\in \M(A)\):
\begin{enumerate}
  \item \textbf{After-Return.}
  \(\varphi(a)\ \le\ \after{x}{\mu(a)}{\varphi(x)}\).
  \item \textbf{After-Bind.}
  $\after{x}{m}{(\after{y}{f(x)}{\varphi(y)})} \leq \after{y}{(m \kleisli f)}{\varphi(y)}$
  \item \textbf{Internal monotonicity.}
  $$
    \Big(\bigwedge_{x\in A} (\varphi_1(x)\Rightarrow \varphi_2(x))\Big)
    \ \le\
    \Big(\after{x}{m}{\varphi_1(x)}\Rightarrow \after{x}{m}{\varphi_2(x)}\Big)
  $$
\end{enumerate}
\end{definition}
Intuitively, $\after{x}{m}{\varphi\left(x\right)}$ should be read as 
`run the computation $m$, obtain a value $x$, and then evaluate the post-condition $\varphi(x)$'.
Thus $\diamondop$ acts as a weakest-precondition style predicate transformer for effectful computations.
The axioms express sound reasoning principles for such sequencing:
\emph{After-Return} states that pure values satisfy their post-conditions,
\emph{After-Bind} ensures compatibility with sequential composition (Kleisli bind),
and \emph{Internal monotonicity} guarantees that strengthening post-conditions preserves validity.

Finally, effectful realizability typically restricts which codes are allowed to serve as \emph{evidence} for entailments through the notion of a separator.
In the context of this paper we take the separator to be the full set $C$.
A \emph{monadic core} is then a tuple 
$
\MC \;=\; (\Code,\Omega,\diamondop).
$

\subsection{From monadic cores to evidenced frames}
\label{sec:mc-to-ef}

Evidenced frames refine Heyting style semantics by recording \emph{proof relevant} entailments~\cite{Cohen2021}.
Intuitively, an entailment $\entails{\varphi}{e}{\psi}$ is witnessed by a concrete evidence term $e$ that uniformly converts any realizer of $\varphi$ into a realizer of $\psi$.
More generally, an evidenced frame can be regarded as a frame, or (more accurately) a complete Heyting algebra, in which one provides evidence that an element is smaller than another, and there can be multiple forms of evidence for the same fact.
This proof relevance is what supports realizability style program extraction and enables comparing different effectful semantics at the level of uniform evidence.

Concretely, an evidenced frame is a tuple $(\Phi,\Ev,\cdot \xle{\cdot} \cdot)$ where $\Phi$ is a set of \emph{propositions}, $E$ is a set of \emph{evidence terms},
and $\phi_1 \xle{e} \phi_2$ is a ternary \emph{evidence relation} on
$\Phi\times E\times \Phi$, together with a designated
set of proposition-formers and evidence constructors.
Given a monadic core $\MC=(\Code,\Omega,\diamondop)$ one obtains an evidenced frame
$\ef_\MC=(\Phi_\MC,\Ev_\MC,\entails{\cdot}{\cdot}{\cdot})$ as follows:
propositions are $\Omega$-valued predicates on codes (i.e.~$\Phi_\MC \defeq \Code \to \Omega$),
evidence are codes (i.e.~$\Ev_\MC \defeq \Code$),
and entailment is the tracked refinement condition
$$
\varphi \xle{e} \psi
\quad\!\!\Longleftrightarrow\quad
\forall c\in\Code.\ 
\varphi(c) \le \after{x}{e \cdot c}{\psi(a)}.
$$
Intuitively, $e$ is a uniform evidence transformer: given any realizer $c$ satisfying $\varphi$,
the computation $(e\cdot c)$ produces outputs that satisfy $\psi$ under the modality $\diamondop$.

\subsection{From evidenced frames to realizability triposes}
\label{sec:ef-to-tripos}

Every evidenced frame $\ef=(\Phi,E,\entails{\cdot}{\cdot}{\cdot})$ induces a tripos $P_\ef:\Set^{op}\to \mathbf{HeytPreord}$ by taking, for each set $X$,
the fiber $P_\ef(X)$ to consist of predicates $X\to \Phi$ ordered by \emph{pointwise evidenced entailment}:
$
\alpha \le \beta
\quad\text{iff}\quad
\exists e\in E.\ \forall x\in X.\ \entails{\alpha(x)}{e}{\beta(x)}
$.
Reindexing is by precomposition, and the generic tripos structure (Heyting operations, quantifiers) is defined uniformly from the evidenced-frame structure~\cite{Cohen2021}.

A closed proposition in the induced logic corresponds to an element of $\Phi$ (take $X=1$).
Thus, if $\entails{\varphi}{e}{\psi}$ holds in the evidenced frame, then the induced tripos validates the implication $\varphi \Rightarrow \psi$,
and the same code $e$ is precisely the \emph{realizer} witnessing this implication.
More generally, a single evidence term $e$ establishing $\alpha(x)\Rightarrow \beta(x)$ uniformly for all $x\in X$ is exactly what makes the induced higher-order reading nontrivial. 
Thus, 
it is uniform proof-relevant structure, not merely pointwise semantic validity.

The tripos is the model of higher-order logic and  the evidenced frame is the proof-relevant layer that remembers \emph{how} entailments are witnessed.
This is what lets us state and prove compilation and transport results (one evidence term that works uniformly across all realizers),
and later read them as sound principles in the induced logic.

\subsection{Measure-theoretic conventions}\label{sec:meas-theory}

This section recalls the basic probabilistic notions used in the paper.
We keep the presentation lightweight and oriented toward discrete and tape-based semantics.
A $\sigma$-algebra on a set $X$ is a nonempty collection $\Sigma_X \subseteq \mathcal{P}(X)$
that is closed under complements and countable unions.
A \emph{measurable space} is a pair $(X,\Sigma_X)$ where $X$ is a set and $\Sigma_X$ is a $\sigma$-algebra on $X$.
For $(X,\Sigma_X)$ and $(Y,\Sigma_Y)$  measurable spaces, 
a function $f : X \to Y$ is \emph{measurable} if for every $A \in \Sigma_Y$, $f^{-1}(A) \in \Sigma_X$.
A \emph{measure} on $(X,\Sigma_X)$ is a function $\mu:\Sigma_X \to [0,\infty]$ such that $\mu(\emptyset)=0$,
and for every countable family of pairwise disjoint sets $(E_i)_{i\in\mathbb{N}}$,
$
\mu\Big(\bigsqcup_{i} E_i\Big)=\sum_i \mu(E_i).
$
A \emph{probability measure} is a measure $\mu$ with $\mu(X)=1$, and a \emph{probability space} is a triple $(X,\Sigma_X,\mu)$.
In the discrete setting used throughout this paper, a probability distribution on a countable set $A$
is a function $\mu : A \to [0,1]$ such that $\sum_{a\in A} \mu(a)=1$.
This presentation is sufficient for all tape-based constructions in this work.

Let $f : X \to Y$ be a measurable function and $\mu$ a probability measure on $X$.
The \emph{pushforward measure} (or \emph{law}) of $f$ is the measure $f_\ast \mu$ on $Y$ defined by $f_\ast \mu(B) \defeq \mu(f^{-1}(B))$.
The law captures the observable output distribution of a computation:
it forgets how randomness is consumed internally and retains only the distribution of results.
Crisp predicates correspond to characteristic functions.
For a subset $P \subseteq X$, its indicator is the function
\[
\mathbf{1}_P(x) =
\begin{cases}
1 & x \in P,\\
0 & x \notin P.
\end{cases}
\]

\section{Random tapes via a reader monad}
\label{sec:tape}

This section develops our main probabilistic model based on \emph{random tapes}.
The central idea is to represent probabilistic computation \emph{extensionally} as deterministic computation that reads from
an explicit source of randomness.
Concretely, a probabilistic program is modeled as a (possibly partial) function that consumes a random tape and
produces an observable output.
This is the familiar ``random variable'' view of probabilistic computation, and it is
standard in denotational semantics (see, e.g.,~\cite{ Gianantonio24CCC, ThomasEhrhard2020, Gill1974, ProbSen2016}).
The advantage for our purposes is conceptual and proof-theoretic: randomness is made explicit and can be manipulated
by program transformations (such as tape reindexing and tape splitting), while the underlying program dynamics remains deterministic.

\subsection{The random tape evidenced frame}
\label{subsec:random-tape-ef}

Fix a set $R$ of random tapes equipped with a $\sigma$-algebra $\Sigma$.
A random tape is an abstract representation of a complete supply of random choices for a run of a program.
Formally, we model tapes by a set $R$ whose elements are the possible realizations of randomness.
The canonical example is the space of infinite bit streams $R\defeq 2^{\mathbb N}$, where the program reads bits from the stream as needed.
More generally, one may take $R$ to be any space of infinite streams over a finite alphabet, or a product space encoding multiple sources of randomness.
When extracting numeric probabilities, we additionally fix a probability measure
$\rho$ on $(R,\Sigma)$ specifying how tapes are sampled (e.g.\ the fair-coin product measure on $2^{\mathbb N}$).
Operationally, a deterministic evaluator consumes a tape $r\in R$ by reading a finite prefix (or more, as the computation proceeds),
so sampling $r\sim\rho$ induces the usual probabilistic behavior.

For a countable $X$, $X_{\bot} \defeq X \uplus \{\bot\}$, i.e., $X$ with an added divergence value. We use this to `totalize' the output set, and we equip $X_{\bot}$ with the discrete $\sigma$-algebra.
A tape computation with outputs in $X$ is a measurable function $m:R\to X_{\bot}$.
Define the $\Set$-monad $\M$ by
$$
  \M(X) \defeq \{\, m:R\to X_{\bot}\mid m \text{ is } \Sigma\text{-measurable}\,\}.
$$
We emphasize that $\M$ depends only on the measurable structure $(R,\Sigma)$ (not on any choice of measure on $R$).
We sometimes write $\M_R$ to make this dependence explicit.

This is the (partial) \emph{reader monad} on $R$ (restricted to measurable functions), familiar from semantics and functional
programming: a computation is a function that reads from a shared environment, here the environment is the random tape~\cite{Kozen1981, DeterminisitcStreamSample2023}.
The crucial point is that randomness is not internalized as probabilistic choice in the monad.
Instead, the monad exposes randomness as an explicit parameter $r\in R$.
Reusing the same tape $r$ across sequential composition makes correlations explicit, and operations such as tape reindexing
and splitting can be expressed as deterministic transformations on tapes.

The unit $\unit_X:X\to \M(X)$ is the constant function
 $ \unit_X(x)(r)\defeq x$.
Given $f:X\to \M(Y)$ and $m\in \M(X)$, the Kleisli bind threads the \emph{same} tape through sequential composition:
$$
  (m\kleisli f)(r)\defeq
  \begin{cases}
    \bot & \text{if } m(r)=\bot,\\
    f(x)(r) & \text{if } m(r)=x\in X.
  \end{cases}
$$

\begin{lemma}\label{lem:reader-monad}
$\M$ forms a monad.
\end{lemma}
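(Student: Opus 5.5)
We must show that $\M$ is a monad on $\Set$, viewed as a Kleisli triple $(\M,\unit,\kleisli)$. Three things need checking. First, $\unit$ and $\kleisli$ actually land in $\M$, i.e.\ they preserve $\Sigma$-measurability. Second, the three monad laws hold: left unit $\unit(x)\kleisli f = f(x)$, right unit $m\kleisli\unit = m$, and associativity $(m\kleisli f)\kleisli g = m\kleisli(\lambda x.\, f(x)\kleisli g)$. The functor action is then derived as $\M(h)(m) \defeq m\kleisli(\unit\circ h)$, which amounts to $r\mapsto h(m(r))$ with $\bot\mapsto\bot$. Functoriality and naturality of $\unit$ follow from the Kleisli-triple laws in the standard way.

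Well-definedness comes first. The unit $\unit_X(x)$ is constant, so it is measurable for any $\Sigma$.

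For bind, take $m\in\M(X)$, $f:X\to\M(Y)$ and a set $B\subseteq Y_\bot$. Since $X_\bot$ and $Y_\bot$ are discrete, every subset is measurable. We decompose
\[
(m\kleisli f)^{-1}(B) \;=\; \bigl(m^{-1}(\{\bot\})\cap[\bot\in B]\bigr)\;\cup\;\bigcup_{x\in X}\bigl(m^{-1}(\{x\})\cap f(x)^{-1}(B)\bigr),
\]
where $[\bot\in B]$ is $R$ or $\emptyset$. Each piece is measurable because $m$ and each $f(x)$ are measurable. The union is countable because $X$ is countable, and that is exactly where the standing countability assumption on output sets is used. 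This measurability step is the only non-routine point and the one I expect to be the main obstacle.

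Its hypothesis also imposes a caveat: $\M$ as literally defined is a monad on countable sets, since for uncountable $X$ with the discrete $\sigma$-algebra, measurability of bind can fail. I would state the lemma for the full subcategory of countable sets, which suffices for codes $\Code$ and outputs. Alternatively, for arbitrary sets one could restrict to measurable $m$ with countable image; bind still preserves countable image, since the image of $m\kleisli f$ is contained in a countable union of countable images, and the same decomposition then indexes over the countable image of $m$.

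The monad laws are pointwise equalities of functions $R\to Y_\bot$, so I would fix $r\in R$ and split on whether $m(r)=\bot$. Left unit is immediate because $\unit(x)(r)=x\neq\bot$. For right unit, both sides give $\bot$ when $m(r)=\bot$, and $\unit(x)(r)=x=m(r)$ otherwise. For associativity, when $m(r)=\bot$ both sides are $\bot$. When $m(r)=x$, both sides reduce to $(f(x)\kleisli g)(r)$, which is evaluated by a further case split on $f(x)(r)$. These checks are routine and reflect that $\M$ is the composite of the reader monad $R\to(-)$ with the maybe monad $(-)_\bot$, via the evident distributive law; the pointwise computation makes this explicit.
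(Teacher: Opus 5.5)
Your argument is the paper's: the unit is constant and hence measurable, the preimage under a bind decomposes as a union over $x\in X$ of $m^{-1}(\{x\})\cap f(x)^{-1}(B)$ (a countable union because $X$ is countable), and the monad laws are the pointwise reader/maybe equalities. Your version is in fact more careful than the paper's, which omits the $m^{-1}(\{\bot\})$ piece of the $\bot$-fibre and leaves the use of countability implicit. One small correction to your side remark: $\M(X)$ is typically uncountable, so $\M$ does not restrict to an endofunctor on countable sets (it is only a relative monad along the inclusion into $\Set$), and your countable-image variant is the right way to obtain a genuine monad on $\Set$.
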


   Fix a set of codes $\Code$  with an MCA application over $\M$,
$
  \cdot : \Code \times \Code \to \M(\Code) \subseteq (R\to \Code_{\bot})
$.
Thus for $c_1,c_2\in \Code$,  $c_1\cdot c_2$ denotes a measurable tape computation
$r\mapsto (c_1\cdot c_2)(r)$ in $\Code_{\bot}$.
From this, one can construct a complete Heyting prealgebra as follow.
Take $[0,1]$ with the usual order $\le$ and G\"odel implication ($I_G(a, b) = 1 \text{ if } a \leq b $ and $b$ otherwise).
We take truth values to be tape-indexed quantitative values,
$
  \OmegaTV_R \defeq  (R \to [0,1])
$, 
ordered pointwise.
Intuitively, a truth value $\alpha\in\OmegaTV_R$ is a \emph{random variable of success levels}:
given a tape $r$, the number $\alpha(r)$ is the success level achieved on that particular random tape.
Pointwise order then corresponds to tape-by-tape improvement of success.
Because meets, joins, and implication are computed pointwise, $\OmegaTV_R$ forms a complete Heyting prealgebra.

\begin{definition}
Define an operation $\diamondop$ valued in $\OmegaTV_R$ as follows.
For every countable set $X$,  $m\in \M(X)$, and  predicate $\psi:X\to \OmegaTV_R$, define
$\after{x}{m}{\psi(x)}\in \OmegaTV_R$ by: for each tape $r\in R$,
$$
  \Big(\after{x}{m}{\psi(x)}\Big)(r)
  \defeq
  \begin{cases}
    0 & \text{if } m(r)=\bot,\\
    \psi(x)(r) & \text{if } m(r)=x\in X.
  \end{cases}
$$
\end{definition}

That is, on tape $r$ the modality $\diamondop$ either diverges (yielding success level $0$), or returns a value $x$ and then evaluates the post-condition at the same tape $r$. This is the standard partial-correctness (weakest-precondition) modality for partial computations, defined here pointwise in the tape (e.g.,~\cite{Kozen1981, Zilberstein2023}).

\begin{lemma}\label{lem:diamop-modality}
The operator $\diamondop$ is an $\M$-modality.
\end{lemma}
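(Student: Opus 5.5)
The plan is to check the three axioms of Definition~\ref{def:m-modality} directly. Because the order on $\OmegaTV_R$, and also its meets and its (G\"odel) implication, are all computed pointwise in the tape, each axiom becomes a statement about a single fixed tape $r\in R$. At a fixed tape we then split into cases according to whether the computation returns a value or $\bot$ there.

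We first dispose of naturality in $X$. For $g:X\to Y$ we have $\M(g)(m)(r)=g(m(r))$ when $m(r)\neq\bot$, and $\bot$ otherwise. Hence $\diamondop_Y(\M(g)(m))(\psi)$ and $\diamondop_X(m)(\psi\circ g)$ agree at every tape $r$: both equal $0$ if $m(r)=\bot$, and both equal $\psi(g(x))(r)$ if $m(r)=x$.

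The three axioms are then checked at a fixed $r$.

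\textbf{After-Return.} The unit satisfies $\unit(a)(r)=a$, so $\after{x}{\unit(a)}{\varphi(x)}(r)=\varphi(a)(r)$. The inequality therefore holds with equality.

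\textbf{After-Bind.} Suppose first that $m(r)=\bot$. Then the left side is $0$ at $r$, and the inequality is trivial. Suppose instead that $m(r)=x$. Then the left side at $r$ equals $\after{y}{f(x)}{\varphi(y)}(r)$. By the definition of bind, $(m\kleisli f)(r)=f(x)(r)$. Two subcases remain:
\begin{itemize}
\item if $f(x)(r)=\bot$, both sides are $0$;
\item if $f(x)(r)=y$, both sides are $\varphi(y)(r)$.
\end{itemize}
Again we obtain equality. The key fact is that bind threads the same tape $r$ through both stages.

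\textbf{Internal monotonicity.} This is the step needing the most care. The reason is that $\bigwedge$ and $\Rightarrow$ must be read in $\OmegaTV_R$, i.e.\ pointwise through G\"odel implication $I_G$. We must show that for each $r$,
\[
\inf_{x}\, I_G(\varphi_1(x)(r),\varphi_2(x)(r)) \le I_G(u,v),
\]
where $u$ and $v$ are the values at $r$ of $\after{x}{m}{\varphi_1(x)}$ and $\after{x}{m}{\varphi_2(x)}$ respectively.

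If $m(r)=\bot$, then $u=v=0$, so $I_G(u,v)=1$ and the inequality holds. If $m(r)=x_0$, then $u=\varphi_1(x_0)(r)$ and $v=\varphi_2(x_0)(r)$. The right side is then exactly the $x_0$-th term of the infimum, so the inequality holds because an infimum is at most each of its terms.

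Two side conditions remain to confirm. First, measurability of bind, needed to stay inside $\M$, is already covered by Lemma~\ref{lem:reader-monad}. Second, the pointwise structure of $\OmegaTV_R$ really is the Heyting structure used in the axioms; the paper states this just before the definition of $\diamondop$.

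I expect the only genuine subtlety to be making sure the infimum and implication in the monotonicity axiom are the pointwise ones. Once that is granted, every case above is immediate.
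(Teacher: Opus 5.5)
Your proof is correct and follows the paper's argument: each axiom is checked pointwise at a fixed tape $r$ by case-splitting on whether $m(r)=\bot$, using that the Heyting operations on $\OmegaTV_R$ are computed pointwise. Your explicit naturality check and your spelled-out G\"odel-implication computation for internal monotonicity only fill in details that the paper's short proof leaves implicit.
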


\begin{definition}[The random tape evidenced frame]
\label{def:ef-tape}
Let $(\Code,\cdot)$ be an MCA over $\M$ and let $\diamondop$ be the above modality valued in $\OmegaTV_R$.
The induced evidenced frame is denoted $\ef_{\mathsf{tape}}$.
\end{definition}
The propositions in $\ef_{\mathsf{tape}}$ are tape-indexed quantitative predicates
$$
  \Phi_{\mathsf{tape}} \defeq \Code \to \OmegaTV_R = \Code \to (R\to[0,1]).
$$
For $\varphi\in\Phi_{\mathsf{tape}}$ and $c\in \Code$, the value $\varphi(c)(r)$ is the success level asserted for running $c$
on tape $r$.
A postcondition is a predicate $\psi:\Code \to(R\to[0,1])$.
For instance, if $P\subseteq \Code$ is a crisp property of outputs (i.e., that takes only Boolean truth values), we may take the tape-invariant postcondition
$\psi_P(a)(r)\defeq \mathbf{1}_P(a)$ 
Then for $c_1,c_2\in \Code$, the truth value
$$
    \after{a}{(c_1 \cdot c_2)}{\psi_P(a)} \in (R\to[0,1])
$$
maps each tape $r$ to $1$ exactly when $(c_1\cdot c_2)(r)\in P$, and to $0$ otherwise (including divergence).
An entailment
$
  \entails{\varphi}{e}{\psi}
$
 unfolds to the pointwise condition
$$
  \forall c\in A,\ \forall r\in R.\quad
  \varphi(c)(r)\ \le\
  \Big(\after{a}{(e \cdot c)}{\psi(a)} \Big)(r).
$$
Thus entailment is a uniform tracked refinement principle parameterized by the tape: the same evidence $e$ works for all inputs
$c$, and the refinement holds tape-by-tape.
For example, fix a crisp postcondition  $\psi_P$ as above and take $\varphi$ to be the constant predicate
$\top(c)(r)\defeq 1$.
Then $\entails{\top}{e}{\psi_P}$ asserts that for every input code $c$ and every tape $r$, if $e\cdot c$ terminates on $r$
then its output lies in $P$ (and divergence is treated as failure with score $0$).

Entailments in $\ef_{\mathsf{tape}}$ are tape-indexed refinements, the next subsection explains how to extract numeric probability bounds from them once a measure $\rho$ on $(R,\Sigma)$ is fixed.

\subsection{From tape-indexed truth values to numeric probabilities}
\label{sec:bridge:extraction}

In this section we fix a probability measure $\rho$ on $(R,\Sigma)$.
The tape evidenced frame is built over $\Omega_R\defeq (R\to[0,1])$ to keep completeness and avoid measurability side conditions.
Measurability is needed only for probabilistic interpretation.
Accordingly, let $\Omega_{\meas}\subseteq \Omega_R$ be the measurable functions.
All expressions to which we apply expectation or almost-sure reasoning lie in $\Omega_{\meas}$ since 
$\Omega_{\meas}$ is closed under pullback along measurable tape maps and under the tape modality.

\begin{lemma}
\label{lem:diamond-meas}
Let $X$ be countable and let $m:R\to X_\bot$ be measurable. If $\psi:X\to \Omega_{\meas}$
then $\langle \diamond x \leftarrow m\rangle \psi(x)\in \Omega_{\meas}$.
\end{lemma}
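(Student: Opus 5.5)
The plan is to write the modality as a countable sum of measurable pieces indexed by the possible outputs of $m$. Set $g \defeq \after{x}{m}{\psi(x)}$. By definition, for each tape $r$,
\[
g(r) \;=\; \sum_{x\in X} \mathbf{1}_{m^{-1}(\{x\})}(r)\cdot \psi(x)(r),
\]
since the fibres $m^{-1}(\{x\})$ for $x\in X$, together with $m^{-1}(\{\bot\})$, partition $R$. On $m^{-1}(\{\bot\})$ every summand vanishes, which matches the value $0$ that the definition assigns to divergence. On $m^{-1}(\{x\})$ exactly one summand survives, namely $\psi(x)(r)$.

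The steps, in order, are as follows.

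First, since $X_\bot$ carries the discrete $\sigma$-algebra and $m$ is $\Sigma$-measurable, each fibre $E_x \defeq m^{-1}(\{x\})$ lies in $\Sigma$. Hence each indicator $\mathbf{1}_{E_x}$ is measurable.

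Second, each $\psi(x)\in\Omega_{\meas}$ is measurable by hypothesis. The product $\mathbf{1}_{E_x}\cdot\psi(x)$ is therefore measurable, using the standard fact that products of real-valued measurable functions are measurable.

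Third, $X$ is countable, so the sum is a countable sum of nonnegative measurable functions. Such a sum is the pointwise limit of measurable finite partial sums, and is therefore measurable.

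Fourth, $g$ takes values in $[0,1]$, because at each $r$ at most one summand is nonzero and that summand lies in $[0,1]$. So $g\in\Omega_{\meas}$.

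An alternative route avoids limits altogether by computing preimages directly. For a Borel set $B\subseteq[0,1]$,
\[
g^{-1}(B) \;=\; \bigl(E_\bot\cap[\,0\in B\,]\bigr)\;\cup\;\bigcup_{x\in X}\bigl(E_x\cap \psi(x)^{-1}(B)\bigr),
\]
where $E_\bot \defeq m^{-1}(\{\bot\})$ and $[\,0\in B\,]$ denotes $R$ if $0\in B$ and $\emptyset$ otherwise. This is a countable union of sets in $\Sigma$, so it lies in $\Sigma$. I would likely present this version, since it is more elementary.

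There is no real obstacle here. The only points needing care are that countability of $X$ is what keeps the union countable, and that the divergence case must be handled separately.
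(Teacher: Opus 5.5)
Your proof is correct, and the preimage version you say you would present is essentially the paper's argument: the paper decomposes the superlevel sets $\{r \mid g(r) > q\}$ for $q\in\mathbb{Q}$ over the fibres as $\bigcup_{x\in X}\bigl(m^{-1}(\{x\})\cap\{r\mid \psi(x)(r)>q\}\bigr)$, and you apply the same fibre decomposition to an arbitrary Borel $B$. Your explicit $E_\bot\cap[\,0\in B\,]$ term is a small improvement, because the paper's formula tacitly needs $q\ge 0$: for $q<0$ the superlevel set is all of $R$, which includes the divergence fibre.
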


To recover ordinary numeric probability statements from random-tape evidenced frame entailments, we aggregate tape-indexed truth values by expectation with respect to $\rho$.
This is a standard monotone interpretation that yields sound numeric probability bounds, which we record because it is the bridge we use later to read probabilistic consequences of evidence transport and splitting.
For $\alpha\in \Omega^{\meas}$ define its expectation by
$$
  E_\rho(\alpha) \defeq \int_R \alpha(r)\, d\rho(r)\ \in [0,1].
$$
For a tape-indexed predicate $\varphi:\Code\to\Omega^{\meas}$ define its extracted numeric predicate
$E_\rho(\varphi):\Code\to[0,1]$ by
 $ (E_\rho(\varphi))(c) \defeq E_\rho(\varphi(c))$.
Note that expectation is monotonic, i.e., if $\alpha,\beta\in\Omega^{\meas}$ and $\alpha(r)\le \beta(r)$ for all $r\in R$, then
$E_\rho(\alpha)\le E_\rho(\beta)$.

Since expectation is monotone, entailments between measurable predicates yield inequalities after extraction:
if $\varphi,\psi:A\to\Omega_{\meas}$ and $\ef_{\mathsf{tape}}\vdash \entails{\varphi}{e}{\psi}$, then for every $c\in A$,
$$
  E_\rho(\varphi(c)) \ \le\ E_\rho\!\Big(\after{a}{(e\cdot c)}{\psi(a)}\Big),
$$
(where the right-hand side is well defined by Lemma~\ref{lem:diamond-meas}).
We will use this repeatedly to turn the tape-level theorems (transport, splitting) into numeric bounds.

But while extracting probability bounds is sound, it is not complete. That is, numeric inequalities after extraction do not, in general, determine the underlying
pointwise order  and therefore do not entail an evidenced-frame entailment.
This is because an inequality may hold after applying $E_\rho$
even if it fails on a set of tapes of small (but nonzero) $\rho$-measure.

For $c', c\in \Code$, since $(c' \cdot c)\in \M(\Code)$ is a measurable function $R\to \Code{\bot}$ (with $\Code_{\bot}$ discrete), it induces a probability
measure on $\Code_{\bot}$ by pushforward: for $B\subseteq \Code_{\bot}$, 
$
  \mu_{c',c}(B) \defeq \rho\big((c' \cdot c)^{-1}(B)\big)
$.
In particular, for each $a\in \Code_{\bot}$,
 $ \mu_{c',c}(\{a\}) \;=\; \rho\big(\{\,r\in R \mid (c' \cdot c)(r)=a\,\}\big)$.
Restricting $\mu_{c',c}$ to $\Code$ yields a subdistribution (the missing mass is the divergence probability $\mu_{c',c}(\{\bot\})$).

As standard, $(e\cdot c)$ induces a subdistribution $\mu_{e,c}$ by pushforward.
For any numeric postcondition $\psi_0:A\to[0,1]$, the extracted value
$
E_\rho\!\Big(\,\langle \after{a}{(e \cdot c)}{\psi_0^\sharp(a} \Big)
$
coincides with the integral $\int \psi_0\, d\mu_{e,c}$ (and with the corresponding sum when $A$ is countable).
For a crisp postcondition $\psi_0=\mathbf{1}_P$, this is simply the event probability $\mu_{e,c}(P)$.

\subsection{Almost-sure extensionality for tape truth values}
\label{sec:tape:as-quotient}

To connect this intensional layer to probabilistic semantics, we move to a second, extensional layer.
Once a measure $\rho$ on $(R,\Sigma)$ is fixed, changing a computation or a truth value on a $\rho$-null set should not change its probabilistic meaning.
We therefore quotient tape truth values by almost-sure equality, obtaining a Heyting structure $\Omega_{a.s.}$ and an \emph{almost-sure} evidenced frame.
From this extensional layer there are two systematic ways to read probabilistic content:
a numeric extraction map (expectation and event probabilities), and a Boolean \emph{probability-one collapse} that records whether a truth value holds almost surely.
We use the probability-one collapse as the interface to the later must-style, law-level abstraction.
\begin{definition}
Let $ \alpha, \beta \in \Omega_{\meas}$.
Define $\alpha \equiv_{\mathrm{a.s.}} \beta$ iff
$
\rho(\{\, r\in R \mid \alpha(r)\neq \beta(r)\,\}) = 0
$.
Write $[\alpha]$ for the $\equiv_{\mathrm{a.s.}}$-equivalence class of $\alpha$, and let
$
\Omega_{a.s.}\defeq \Omega_{\meas}/\equiv_{a.s.}
$.
Define an order on $\Omega_{\mathrm{a.s.}}$ by
$$
[\alpha] \le_{\mathrm{a.s.}} [\beta]
\quad\Longleftrightarrow\quad
\rho(\{\, r\in R \mid \alpha(r)\le \beta(r)\,\}) = 1.
$$
\end{definition}
Intuitively, $[\alpha]$ records the truth value $\alpha$ \emph{up to null sets of tapes}, and
$[\alpha]\le_{\mathrm{a.s.}}[\beta]$ means that $\alpha(r)\le\beta(r)$ holds for $\rho$-almost every tape $r$.
There is a canonical collapse from $\Omega_{a.s.}$ to Boolean truth values, sending a class $[\alpha]$ to $1$ exactly when $\alpha(r)=1$ for $\rho$-almost every tape $r$.
Intuitively, this forgets quantitative information and keeps only the probability-one content.
This collapse is the bridge point between tape-based reasoning and the must-style, distribution-level abstraction in Section~\ref{sec:dist-final}.

\begin{definition}
Let $\mathcal{A}\subseteq  \Omega_{\meas}
$ be any family of measurable functions.
An \emph{essential upper bound} of $\mathcal{A}$ is a function $u\in\Omega$ such that
$\rho(\{r\mid \alpha(r)\le u(r)\})=1$ for all $\alpha\in\mathcal{A}$.
An \emph{essential supremum} of $\mathcal{A}$ is an essential upper bound $u$ such that for every essential upper bound
$v$ of $\mathcal{A}$ we have $u\le_{\mathrm{a.s.}} v$.
We write $u=\mathrm{ess\,sup}\,\mathcal{A}$.
Essential infimum $\mathrm{ess\,inf}\,\mathcal{A}$ is defined dually.
\end{definition}
Essential suprema/infima exist in $\Omega$ for arbitrary families of $[0,1]$-valued measurable functions,
and are unique up to $\equiv_{\mathrm{a.s.}}$.
(Equivalently: $\Omega_{\mathrm{a.s.}}$ is a complete lattice, with arbitrary joins/meets represented by
$\mathrm{ess\,sup}$ and $\mathrm{ess\,inf}$.)

We next show that $\Omega_{\mathrm{a.s.}}$ is a complete Heyting prealgebra.

\begin{proposition}
\label{prop:omega-as-heyting}
$\Omega_{\mathrm{a.s.}}$ is a complete Heyting prealgebra.
For $[\alpha],[\beta]\in \Omega_{\mathrm{a.s.}}$, define: 
$$
[\alpha]\wedge[\beta] \defeq \big[\,\min(\alpha,\beta)\,\big],
\qquad
[\alpha]\vee[\beta] \defeq \big[\,\max(\alpha,\beta)\,\big],
\qquad
[\alpha]\Rightarrow[\beta] \defeq \big[\,\alpha \Rightarrow \beta\,\big],
$$
where $\min,\max$ are pointwise, and $\Rightarrow$ is the pointwise G\"odel implication on $[0,1]$.
For any family $\{[\alpha_i]\}_{i\in I}\subseteq \Omega_{\mathrm{a.s.}}$, define: 
$
\bigvee_{i\in I}[\alpha_i] \defeq \big[\,\mathrm{ess\,sup}_{i\in I}\alpha_i\,\big],
$ and $
\bigwedge_{i\in I}[\alpha_i] \defeq \big[\,\mathrm{ess\,inf}_{i\in I}\alpha_i\,\big]
$.
\end{proposition}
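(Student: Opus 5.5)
The argument follows the standard route for showing that a quotient of a pointwise Heyting algebra by an ideal of null sets is again Heyting. We carry it out in four steps: well-definedness, the lattice operations, the implication, and completeness via essential suprema and infima.

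\textbf{Well-definedness.} First we check that $\le_{\mathrm{a.s.}}$ is a preorder on $\Omega_{\mathrm{a.s.}}$ that does not depend on representatives. If $\alpha\equiv_{\mathrm{a.s.}}\alpha'$ and $\beta\equiv_{\mathrm{a.s.}}\beta'$, then $\{\alpha\le\beta\}$ and $\{\alpha'\le\beta'\}$ differ only inside $\{\alpha\neq\alpha'\}\cup\{\beta\neq\beta'\}$, which is null. All these sets are measurable because $\alpha,\beta\in\Omega_{\meas}$. Reflexivity is immediate, and transitivity follows because the intersection of two full-measure sets has full measure.

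Next we check that $\min$, $\max$ and pointwise G\"odel implication preserve $\Omega_{\meas}$. For the implication, write
\[
(\alpha\Rightarrow\beta)(r)=\mathbf{1}_{\{\alpha\le\beta\}}(r)+\mathbf{1}_{\{\alpha>\beta\}}(r)\,\beta(r),
\]
which is a sum of products of measurable functions, hence measurable. Each operation acts tape-by-tape, so changing an argument on a null set changes the result only on that null set. Hence the operations descend to classes.

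\textbf{Binary structure.} For meets, joins and implication we transfer the pointwise facts of $\OmegaTV_R$ (each fiber $[0,1]$ is a Heyting algebra under G\"odel implication) to almost-sure ones. For example, $[\gamma]\wedge[\alpha]\le_{\mathrm{a.s.}}[\beta]$ holds iff $\min(\gamma(r),\alpha(r))\le\beta(r)$ for a.e.\ $r$. By the pointwise adjunction this is iff $\gamma(r)\le(\alpha\Rightarrow\beta)(r)$ for a.e.\ $r$, i.e.\ $[\gamma]\le_{\mathrm{a.s.}}[\alpha]\Rightarrow[\beta]$. The equivalence holds because the two sets of tapes on which the conditions hold are literally equal. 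The meet and join universal properties are handled the same way, using that a finite intersection of full-measure sets has full measure. The constant functions $0$ and $1$ give bottom and top.

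\textbf{Completeness (main obstacle).} Arbitrary joins and meets are the step I expect to be hardest. The family $\{\alpha_i\}_{i\in I}$ may be uncountable, so a pointwise supremum need be neither measurable nor correct modulo null sets. Here I would invoke the fact recalled just before the proposition, that essential suprema and infima exist in $\Omega_{\meas}$ and are unique up to $\equiv_{\mathrm{a.s.}}$.

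If a self-contained argument is wanted, the standard one applies, using that $\rho$ is a finite measure. Close the family under finite maxima, and maximize $E_\rho(\max_{i\in J}\alpha_i)$ over countable $J\subseteq I$. Pick a maximizing sequence of countable sets and let $J^*$ be their union, which is countable, and set $u=\sup_{J^*}\alpha_i$. Then $u$ is measurable as a countable supremum. It is an essential upper bound: if $\rho(\{\alpha_k>u\})>0$ for some $k$, adding $k$ to $J^*$ would strictly increase the expectation, contradicting maximality. It is least: any essential upper bound $v$ dominates each of the countably many $\alpha_i$ with $i\in J^*$ almost surely, hence dominates $u$ almost surely.

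Independence of representatives holds because the definition of an essential upper bound refers only to a.s.\ inequalities. The universal property of $\bigvee$ is then exactly the definition of ess sup; ess inf is dual. This gives a complete lattice.

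\textbf{Wrapping up.} Finally we combine the pieces. The binary meet together with its right adjoint $\Rightarrow$ makes $\Omega_{\mathrm{a.s.}}$ a Heyting algebra. Because it is also a complete lattice, it is a complete Heyting (pre)algebra. Infinite distributivity of $\wedge$ over $\bigvee$ follows automatically from the adjunction, so it need not be checked separately.
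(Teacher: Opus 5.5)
Your proposal is correct and follows the paper's route. The paper likewise argues well-definedness of $\min$, $\max$ and G\"odel implication on classes, transfers the pointwise Heyting adjunction to almost-sure inequalities, and gets completeness from essential suprema/infima. You also fill in what the paper leaves implicit: measurability of the implication, and a self-contained construction of $\mathrm{ess\,sup}$ using finiteness of $\rho$. You rightly rest well-definedness on the operations being pointwise, not on the paper's appeal to continuity, which fails for G\"odel implication.
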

\begin{proof}
Well-definedness: if $\alpha\equiv_{\mathrm{a.s.}}\alpha'$ and $\beta\equiv_{\mathrm{a.s.}}\beta'$ then
$\min(\alpha,\beta)\equiv_{\mathrm{a.s.}}\min(\alpha',\beta')$, similarly for $\max$ and for G\"odel implication,
because these operators are defined pointwise and are continuous on $[0,1]$.
Completeness: for any family $\{\alpha_i\}$, essential sup/inf exist and are unique up to $\equiv_{\mathrm{a.s.}}$,
hence define joins/meets in the quotient.
Heyting adjunction: for $\alpha,\beta,\gamma\in\Omega$, the pointwise G\"odel implication satisfies
$\min(\alpha(r),\gamma(r))\le \beta(r)$ iff $\gamma(r)\le(\alpha(r)\Rightarrow\beta(r))$ for each $r$,
therefore $\alpha\wedge\gamma\le_{\mathrm{a.s.}}\beta$ iff $\gamma\le_{\mathrm{a.s.}}(\alpha\Rightarrow\beta)$.
\end{proof}

\begin{lemma}
\label{lem:modality-respects-as}
Let $X$ be countable and let $m:R\to X_\bot$ be measurable.
If $\psi,\psi':X\to \Omega_{\meas}$ satisfy $\psi(x)\equiv_{\mathrm{a.s.}}\psi'(x)$ for all $x\in X$, then
$
\after{x}{m}{\psi(x)}
\ \equiv_{\mathrm{a.s.}}\
\after{x}{m}{\psi'(x)}.
$
\end{lemma}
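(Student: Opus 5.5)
The plan is to exhibit an explicit $\rho$-null set outside of which the two truth values agree pointwise. Countability of $X$ is what makes this possible. For each $x\in X$ let
$$
N_x \defeq \{\, r\in R \mid \psi(x)(r)\neq \psi'(x)(r)\,\}.
$$
By hypothesis $\rho(N_x)=0$. This set is measurable, because $\psi(x)$ and $\psi'(x)$ lie in $\Omega_{\meas}$, so $N_x$ is the preimage of the complement of the diagonal (a Borel set in $[0,1]^2$) under the measurable pair $(\psi(x),\psi'(x))$. Set $N\defeq\bigcup_{x\in X}N_x$. Since $X$ is countable, countable subadditivity gives $\rho(N)=0$, and $N\in\Sigma$.

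Next we check that the two sides agree off $N$ by unfolding the definition of $\diamondop$ tape by tape. Fix $r\notin N$. If $m(r)=\bot$, both sides equal $0$. If $m(r)=x\in X$, the left side is $\psi(x)(r)$ and the right side is $\psi'(x)(r)$, and these coincide since $r\notin N_x$. Hence
$$
\{\, r \mid (\after{x}{m}{\psi(x)})(r)\neq(\after{x}{m}{\psi'(x)})(r)\,\}\subseteq N .
$$

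This set is measurable: both sides are in $\Omega_{\meas}$ by Lemma~\ref{lem:diamond-meas}, so the same diagonal argument applies. A measurable subset of a null set has measure $0$, which gives the claimed $\equiv_{\mathrm{a.s.}}$. One could avoid this measurability check by using completeness of $\rho$, but invoking Lemma~\ref{lem:diamond-meas} suffices.

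The only genuine point is the union step. Pointwise a.s.\ agreement for each $x$ yields a uniform exceptional set only because $X$ is countable. For uncountable $X$ the union of null sets could fail to be null, or even fail to be measurable. No other obstacle is expected.
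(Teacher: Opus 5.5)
Your proposal is correct and follows the paper's proof exactly: the same exceptional sets $N_x$, their countable union $N$ as a single null set, and the same case split on $m(r)$ for $r\notin N$. The measurability checks you add (for each $N_x$, and for the disagreement set via Lemma~\ref{lem:diamond-meas}) are points the paper leaves implicit, and they are handled correctly.
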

For a postcondition $\bar\psi:X\to \Omega_{\mathrm{a.s.}}$, choose measurable representatives
$\psi:X\to \Omega_{\meas}$ with $[\psi(x)]=\bar\psi(x)$ for all $x$.
By Lemma~\ref{lem:modality-respects-as} we can define
$$
\afteras{x}{m}{\bar{\psi}(x)}
\ \defeq\
\Big[\after{x}{m}{\psi(x)}]\ \in\ \Omega_{\mathrm{a.s.}}.
$$

\begin{lemma}
\label{prop:quotient-modality-axioms}
The operation $\diamond^{\mathrm{a.s.}}$ is an $M$-modality.
\end{lemma}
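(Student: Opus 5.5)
The plan is to reduce each axiom for $\diamond^{\mathrm{a.s.}}$ to the corresponding pointwise axiom for $\diamondop$ from Lemma~\ref{lem:diamop-modality}, applied to chosen measurable representatives. The reason this works is that a pointwise inequality $\alpha\le\beta$ between measurable functions gives $[\alpha]\le_{\mathrm{a.s.}}[\beta]$ trivially. Lemma~\ref{lem:diamond-meas} keeps every intermediate expression in $\Omega_{\meas}$, and Lemma~\ref{lem:modality-respects-as} makes the result independent of the choice of representatives.

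I first check naturality in $X$. For $g:X\to Y$, $m\in\M(X)$ and $\bar\psi:Y\to\Omega_{\mathrm{a.s.}}$, I choose representatives $\psi$ of $\bar\psi$. Then $\psi\circ g$ represents $\bar\psi\circ g$. The unquotiented naturality equation holds pointwise, namely $\after{x}{m}{\psi(g(x))}=\after{y}{\M(g)(m)}{\psi(y)}$ (both read $m(r)$ and evaluate $\psi$ at $g(m(r))$), and I pass to classes.

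\emph{After-Return.} For $\bar\varphi(a)=[\varphi(a)]$ we have $\varphi(a)\le\after{x}{\unit(a)}{\varphi(x)}$ pointwise; in fact the two are equal. Taking classes gives the inequality in $\Omega_{\mathrm{a.s.}}$.

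\emph{After-Bind.} I choose representatives $\varphi$ for the postcondition. For each $x$, the inner value $\afteras{y}{f(x)}{\bar\varphi(y)}$ has representative $\after{y}{f(x)}{\varphi(y)}$, which is measurable by Lemma~\ref{lem:diamond-meas}. So the left side is the class of $\after{x}{m}{\after{y}{f(x)}{\varphi(y)}}$. Lemma~\ref{lem:modality-respects-as} makes this independent of which representative of the inner class is picked. The pointwise After-Bind inequality then passes to classes.

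\emph{Internal monotonicity.} This step is the main obstacle, because the meet $\bigwedge_{x}$ is now an essential infimum rather than a pointwise infimum. I set $\gamma_x\defeq\varphi_1(x)\Rightarrow\varphi_2(x)$, which is measurable, and let $u$ be a representative of $\mathrm{ess\,inf}_x\gamma_x$. The key point is that $X$ is countable. Hence $u\le\gamma_x$ holds off a null set $N_x$ for each $x$, and the union $N\defeq\bigcup_x N_x$ is still null. On $R\setminus N$ we get $u(r)\le\inf_x\gamma_x(r)$.

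Now fix $r\notin N$. If $m(r)=\bot$, both modal values are $0$ and the Gödel implication equals $1$. If $m(r)=x$, we need $u(r)\le(\varphi_1(x)(r)\Rightarrow\varphi_2(x)(r))$, which holds because $u(r)\le\gamma_x(r)$. This is exactly the pointwise argument of Lemma~\ref{lem:diamop-modality}, run with $u$ in place of the pointwise infimum and restricted to $R\setminus N$. Hence $u\le_{\mathrm{a.s.}}(\after{x}{m}{\varphi_1(x)}\Rightarrow\after{x}{m}{\varphi_2(x)})$. Since $\Rightarrow$ on $\Omega_{\mathrm{a.s.}}$ is the class of the pointwise implication by Proposition~\ref{prop:omega-as-heyting}, this gives the axiom. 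Without countability of $X$ this step could fail, so I would state explicitly where that hypothesis is used.
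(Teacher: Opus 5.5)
Your proposal is correct and follows essentially the same route as the paper: pick measurable representatives, inherit After-Return and After-Bind from the pointwise laws of $\diamondop$, and obtain monotonicity by taking the countable union of the null sets indexed by $x\in X$. Your version is more explicit than the paper's sketch in two places: you check naturality, and you treat the meet in internal monotonicity as an essential infimum, whereas the paper only phrases monotonicity externally.
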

Therefore, via the monadic-core-to-evidenced-frame construction using $\Omega_{\mathrm{a.s.}}$ and $\diamond^{\mathrm{a.s.}}$, we obtain an almost-sure tape evidenced frame $\ef_R^{\mathrm{a.s.}}$.
Concretely, define the extensional propositions to be
$
\Phi_{\mathrm{a.s.}} \defeq \Code \to \Omega_{\mathrm{a.s.}}.
$
Entailment in the extensional tape evidenced frame is:
$$
\entails{\varphi}{e}{\psi}\ \text{ holds iff }\ \forall c\in\Code.\ \varphi(c)\le_{\mathrm{a.s.}}
 \after{a}{(e \cdot c)}{\psi(a)}.
$$
Equivalently, for each fixed input $c$ the required inequality holds for $\rho$-almost every tape $r$.
This quotient therefore removes null-set artefacts while preserving the full higher-order (Heyting) logical structure
needed by the evidenced-frame-to-tripos pipeline.

\begin{proposition}
\label{prop:pointwise-to-as}
$$
\ef_R \vdash \entails{\varphi}{e}{\psi}
\quad\Longrightarrow\quad
\ef_R^{\mathrm{a.s.}} \vdash \entails{[\varphi]}{e}{[\psi]}.
$$
\end{proposition}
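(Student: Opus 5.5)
The plan is to unfold both entailments and observe that the pointwise inequality on the left gives the almost-sure inequality on the right directly. First I fix the reading of the statement. Here $\varphi,\psi:\Code\to\Omega_{\meas}$ are predicates valued in measurable truth values, so that the classes $[\varphi](c)\defeq[\varphi(c)]$ and $[\psi](a)\defeq[\psi(a)]$ are defined. By Lemma~\ref{lem:diamond-meas}, applied with $X=\Code$ (countable) and $m=e\cdot c$, the truth value $\after{a}{(e\cdot c)}{\psi(a)}$ also lies in $\Omega_{\meas}$. By the definition of $\diamond^{\mathrm{a.s.}}$, made legitimate by Lemma~\ref{lem:modality-respects-as}, we may take $\psi$ itself as the chosen representative of $[\psi]$. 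This gives
$$
\afteras{a}{(e\cdot c)}{[\psi](a)} \;=\; \Big[\after{a}{(e\cdot c)}{\psi(a)}\Big].
$$

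Next I fix $c\in\Code$. The hypothesis $\ef_R\vdash\entails{\varphi}{e}{\psi}$ unfolds to $\varphi(c)(r)\le\big(\after{a}{(e\cdot c)}{\psi(a)}\big)(r)$ for every $r\in R$. Consequently the set $\{r\mid \varphi(c)(r)\le (\after{a}{(e\cdot c)}{\psi(a)})(r)\}$ is all of $R$, so its $\rho$-measure is $1$. By the definition of $\le_{\mathrm{a.s.}}$ this says exactly that $[\varphi(c)]\le_{\mathrm{a.s.}}[\after{a}{(e\cdot c)}{\psi(a)}]$. By the displayed identity, that is the required instance of the entailment in $\ef_R^{\mathrm{a.s.}}$. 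Since $c$ was arbitrary and the same evidence $e$ was used throughout, the claim follows.

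The only delicate step is bookkeeping rather than mathematics. I must check that the a.s.\ modality applied to $[\psi]$ is computed from the representative $\psi$, which holds by independence of the choice of representatives (Lemma~\ref{lem:modality-respects-as}). I must also check that $\le_{\mathrm{a.s.}}$ is well defined on classes. For the latter, changing representatives alters the set $\{r\mid\alpha(r)\le\beta(r)\}$ only on a null set, so its measure is unchanged. A further measurability point arises: the set $\{\alpha\le\beta\}$ must be in $\Sigma$ for $\rho$ to apply. It is, because $\alpha$ and $\beta$ are measurable into $[0,1]$; in the present case the set is moreover all of $R$.
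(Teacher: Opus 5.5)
Your proof is correct and follows the paper's argument: for fixed $c$, the pointwise entailment makes the set $\{r \mid \varphi(c)(r)\le(\after{a}{(e\cdot c)}{\psi(a)})(r)\}$ equal to all of $R$, so it has $\rho$-measure $1$, which is exactly the definition of $\le_{\mathrm{a.s.}}$. The extra bookkeeping you add is accurate and is left implicit in the paper: measurability via Lemma~\ref{lem:diamond-meas}, taking $\psi$ itself as representative via Lemma~\ref{lem:modality-respects-as}, and well-definedness of $\le_{\mathrm{a.s.}}$ on classes.
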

\begin{proof}
Unfolding $\ef_R \vdash \entails{\varphi}{e}{\psi}$ we get that for every  $c\in \Code$ and  $r \in R$,  $\varphi(c)(r) \leq \Big(\after{a}{(e \cdot c)}{\psi(a)}\Big)(r)$. 
To show $\ef_R^{\mathrm{a.s.}} \vdash \entails{[\varphi]}{e}{[\psi]}$, we need to show that for every $c \in \Code$, $\varphi(c) \leq_{\mathsf{a.s}} \after{a}{(e \cdot c)}{\psi(a)}$. By definition of $\leq_{\mathsf{a.s}}$, it suffices to show $\rho(\{ r \in R \mid \varphi(c)(r) \leq \Big(\after{a}{(e \cdot c)}{\psi(a)}\Big)(r)\}) = 1$, which holds because $\{ r \in R \mid  \varphi(c)(r) \leq \Big(\after{a}{(e \cdot c)}{\psi(a)}\Big)(r)\} = R$.
\end{proof}

The implication of Proposition~\ref{prop:pointwise-to-as} is strict in general.
Let $R=2^{\mathbb N}$ with the fair product measure $\mu$, and let $r_0\defeq 000\cdots$.
Define measurable truth values $\alpha',\beta':R\to[0,1]$ by
$$
\alpha'(r)\defeq 1,
\qquad
\beta'(r)\defeq
\begin{cases}
0 & \text{if } r=r_0,\\
1 & \text{if } r\neq r_0.
\end{cases}
$$
Then $\alpha'\le_{\mathrm{a.s.}}\beta'$ because $\mu(\{r_0\})=0$ and $\alpha'(r)\le\beta'(r)$ holds for $\mu$-almost every $r$,
but $\alpha'\not\le\beta'$ pointwise since $\alpha'(r_0)=1>\beta'(r_0)=0$.
Thus the almost-sure quotient removes null-set artefacts that are invisible to probabilistic semantics but would block
entailments in the intensional, pointwise tape logic.

Since expectation identifies almost-surely equal random variables, the extraction map
$$
E_\rho:\Omega_{\mathrm{a.s.}}\to[0,1],
\qquad
E_\rho([\alpha]) \defeq \int_R \alpha(r)\,d\rho(r),
$$
is well-defined and monotone.
Thus numeric probability bounds can be extracted from entailments in the extensional tape evidenced frame without
reintroducing pointwise (tape-by-tape) distinctions.

\subsection{Tape maps transport evidenced entailments}\label{sec:tape:tmap}

In the tape evidenced frame, the probabilistic behaviour of a code is mediated entirely by the
tape space: a code does not \emph{sample} randomness internally, but instead deterministically
\emph{consumes} a tape.
It is therefore natural to compare two tape models by a deterministic map that rewires one
tape space into another.
We show that a realizable tape map induces a computable translation of entailment evidence.

Let $(R,\Sigma)$ and $(R',\Sigma')$ be tape spaces.
Given a measurable function $\kappa:(R,\Sigma)\to (R',\Sigma')$, define 
reindexing (by precomposition) maps 
$$
\kappa^\M_{\Code} : \M_{R'}({\Code})\to \M_R({\Code}),
\qquad
(\kappa^\M_{\Code}(m))(r) \defeq m(\kappa(r)),
$$
$$
\kappa^\Omega : \Omega_{R'}\to \Omega_R,
\qquad
(\kappa^\Omega(\alpha))(r) \defeq \alpha(\kappa(r)).
$$
Intuitively, these maps reinterpret an $R'$-indexed computation or observable over $R$ by feeding it the rewired tape $\kappa(r)$. Thus a computation or predicate defined over $R'$ can be viewed over $R$ via precomposition.
This reindexing operation transports both computations and truth values along the tape map $\kappa$. 
We extend $\kappa^\Omega$ pointwise to tape indexed predicates:
for $\varphi':\Code\to \Omega_{R'}$ define $\kappa^\ast\varphi':\Code\to\Omega_R$ by
$
(\kappa^\ast\varphi')(c)\defeq \kappa^\Omega(\varphi'(c))
$ 
and likewise for $\psi':\Code \to \Omega_{R'}$. 

We first show that reindexing along $\kappa$ commutes with the tape modality.
\begin{lemma}\label{lem:kappa-commutes-modality}
For all sets $X$, all $m\in\M_{R'}(X)$, and all $\psi':X\to\Omega_{R'}$,
$$
\kappa^\Omega\Big(\after{x}{m}{\psi'(x)}\Big)
\;=\;
\after{x}{\kappa^\M_{X}(m)}{(\kappa^\ast\psi')(x)}
$$
\end{lemma}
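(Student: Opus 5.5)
The identity is one between two elements of $\Omega_R = (R\to[0,1])$, so I will prove it by function extensionality: fix an arbitrary tape $r\in R$ and show that both sides take the same value at $r$. Both sides unfold directly from the pointwise definition of $\diamondop$ together with the definitions of $\kappa^\M_X$, $\kappa^\Omega$ and $\kappa^\ast$. Here $\kappa^\M_X$ denotes the evident extension of $\kappa^\M_{\Code}$ to an arbitrary output set $X$, namely $(\kappa^\M_X(m))(r)\defeq m(\kappa(r))$.

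\emph{Left-hand side.} By definition of $\kappa^\Omega$ we have
\[
\kappa^\Omega\big(\after{x}{m}{\psi'(x)}\big)(r) = \big(\after{x}{m}{\psi'(x)}\big)(\kappa(r)).
\]
By the definition of the tape modality over $R'$, this equals $0$ if $m(\kappa(r))=\bot$, and equals $\psi'(x)(\kappa(r))$ if $m(\kappa(r))=x\in X$.

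\emph{Right-hand side.} By the definition of the modality over $R$, the value at $r$ is governed by $(\kappa^\M_X(m))(r)=m(\kappa(r))$. It is $0$ if this is $\bot$. If it is $x\in X$, the value is $(\kappa^\ast\psi')(x)(r)=\kappa^\Omega(\psi'(x))(r)=\psi'(x)(\kappa(r))$.

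In each case, $m(\kappa(r))=\bot$ or $m(\kappa(r))=x$, the two sides agree. This gives equality at every $r$, hence equality of functions.

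No case is genuinely hard; the proof is a two-case computation. The only point needing a remark is well-definedness. First, $\kappa^\M_X(m)$ must lie in $\M_R(X)$: as a composite of measurable maps, $m\circ\kappa$ is measurable. Second, the statement quantifies over all sets $X$, while the modality was defined for countable $X$. I will note that the pointwise definition makes sense verbatim for any $X$, since measurability plays no role in the identity, or else restrict to countable $X$ as in the definition.
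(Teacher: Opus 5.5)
Your proof is correct and is essentially the paper's argument: compare both sides pointwise at each tape $r$, splitting on whether $m(\kappa(r))=\bot$ (both sides are $0$) or $m(\kappa(r))=x\in X$ (both sides are $\psi'(x)(\kappa(r))$). Your remarks on the measurability of $m\circ\kappa$ and on the countability of $X$ are sensible additions that the paper leaves implicit.
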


A tape map $\kappa:R\to R'$ induces a semantic reindexing operation by precomposition:
it rewires an $R'$-tape computation into an $R$-tape computation, and it rewires
$R'$-truth values into $R$-truth values.
However, an evidenced-frame entailment is not merely a semantic inequality: it is
\emph{tracked} by a code $e\in\Code$ that serves as uniform evidence across all inputs.
Thus, to transport entailments between tape spaces, it is not enough to reindex computations
and predicates semantically, one must also specify how an evidence code over $R'$ is translated
into an evidence code over $R$.
We therefore require a code translation compatible with semantic reindexing.

\begin{definition}
\label{def:realizable-tapemap}
A measurable $\kappa:(R,\Sigma)\to (R',\Sigma')$ is \emph{realizable} in $\Code$ if there exists a a function
$
\tr_\kappa:\Code\to\Code$ such that for all $e,c\in\Code$,
$
\kappa^\M_{\Code}(e\cdot_{R'} c)
\;=\;
\tr_\kappa(e)\cdot_R c
$.
\end{definition}
Thus realizability of $\kappa$ means that semantic reindexing along $\kappa$ admits a
uniform translation of evidence codes: for every evidence code $e$ over the $R'$-tape model,
the translated code $\tr_\kappa(e)$ over the $R$-tape model has exactly the reindexed behavior.
Under this assumption, entailments proved in the $R'$-tape evidenced frame can be
transported to entailments in the $R$-tape evidenced frame by reindexing predicates and
translating evidence.

\begin{theorem}
\label{thm:tapemap-entailment-transport}
Assume $\kappa:(R,\Sigma)\to (R',\Sigma')$ is realizable with witness $\tr_\kappa:\Code\to\Code$.
Then for all predicates $\varphi',\psi':\Code\to\Omega_{R'}$ and all $e\in\Code$,
$$
  \ef_{R'}\;\vdash\;\entails{\varphi'}{e}{\psi'}
  \quad\Longrightarrow\quad
  \ef_{R}\;\vdash\;\entails{\kappa^*\varphi'}{\tr_\kappa(e)}{\kappa^*\psi'}.
$$
\end{theorem}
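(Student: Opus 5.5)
The plan is to unfold both entailments and reduce the goal to the hypothesis evaluated at the rewired tape $\kappa(r)$, using Lemma~\ref{lem:kappa-commutes-modality} to move the reindexing through the modality and the realizability witness of Definition~\ref{def:realizable-tapemap} to replace the reindexed computation by one driven by $\tr_\kappa(e)$.

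First, fix $c\in\Code$ and $r\in R$. The goal $\ef_R\vdash\entails{\kappa^*\varphi'}{\tr_\kappa(e)}{\kappa^*\psi'}$ unfolds to
\[
(\kappa^*\varphi')(c)(r)\;\le\;\Big(\after{a}{(\tr_\kappa(e)\cdot_R c)}{(\kappa^*\psi')(a)}\Big)(r).
\]
The left-hand side is $\varphi'(c)(\kappa(r))$ by definition of $\kappa^*$. Specialising the hypothesis $\ef_{R'}\vdash\entails{\varphi'}{e}{\psi'}$ at the code $c$ and the tape $\kappa(r)\in R'$ gives
\[
\varphi'(c)(\kappa(r))\;\le\;\Big(\after{a}{(e\cdot_{R'}c)}{\psi'(a)}\Big)(\kappa(r)).
\]
The right-hand side is, by definition of $\kappa^\Omega$, equal to $\kappa^\Omega\big(\after{a}{(e\cdot_{R'}c)}{\psi'(a)}\big)(r)$.

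Second, rewrite this term. Lemma~\ref{lem:kappa-commutes-modality}, with $X=\Code$ and $m=e\cdot_{R'}c\in\M_{R'}(\Code)$, yields $\after{a}{\kappa^\M_\Code(e\cdot_{R'}c)}{(\kappa^*\psi')(a)}$. Realizability gives $\kappa^\M_\Code(e\cdot_{R'}c)=\tr_\kappa(e)\cdot_R c$. Substituting this produces exactly the right-hand side of the goal, so chaining the equalities with the inequality finishes the proof.

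I expect no real obstacle. The only point needing care is bookkeeping. Lemma~\ref{lem:kappa-commutes-modality} is stated for arbitrary sets $X$, while the modality was defined for countable $X$. We should also check that $\kappa^\M_\Code(m)$ is again measurable, which holds because it is a composite of measurable maps. If one prefers not to rely on the lemma, it can be verified directly by a two-case split on whether $m(\kappa(r))=\bot$ or $m(\kappa(r))=x$: both sides are then $0$, respectively $\psi'(x)(\kappa(r))$.
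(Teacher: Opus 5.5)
Your proposal is correct and follows the paper's proof essentially step for step. The paper applies the monotone precomposition $\kappa^\Omega$ to the hypothesis and then uses Lemma~\ref{lem:kappa-commutes-modality} and the realizability equation $\kappa^\M_\Code(e\cdot_{R'}c)=\tr_\kappa(e)\cdot_R c$, which is exactly your pointwise specialisation at the tape $\kappa(r)$ followed by the same two rewrites; your side remarks on countability of $X=\Code$ and measurability of $\kappa^\M_\Code(m)$ are reasonable bookkeeping that the paper leaves implicit.
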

\begin{proof}
Assume $\ef_{R'}\;\vdash\;\entails{\varphi'}{e}{\psi'}$.
Thus for all $c\in\Code$ we have
$
\varphi'(c)\ \le\ 
\after{a}{(e \cdot_{R'} c)}{\psi'(a)}
$ in $\Omega_{R'}$. 
Applying $\kappa^\Omega$ to both sides and using monotonicity of $\kappa^\Omega$ (pointwise order) obtains 
$
\kappa^\ast\varphi'(c)\ = \kappa^{\Omega}(\varphi'(c))\ \le\ \kappa^\Omega\Big(\after{a}{(e \cdot_{R'} c)}{\psi'(a)}\Big)
$.
Now by Lemma~\ref{lem:kappa-commutes-modality} with $m=(e\cdot_{R'} c)$ and $\psi'$, we obtain 
$
\kappa^\ast\varphi'(c)\ \le\ 
\after{a}{\kappa^\M_{\Code}(e \cdot_{R'} c)}{(\kappa^\ast\psi')(a)} 
$.
Finally, the realizability of $\kappa$ (Def.~\ref{def:realizable-tapemap}) allows rewriting 
$\kappa^\M_{\Code}(e\cdot_{R'} c)$ as $(\tr_\kappa( e)\cdot_R c)$, yielding
$
\kappa^\ast\varphi'(c)\ \le\ 
\after{a}{(\tr_\kappa( e)\cdot_R c)}{(\kappa^\ast\psi')(a)}
$, 
and thus  $\ef_{R}\;\vdash\;\entails{\kappa^*\varphi'}{\tr_\kappa( e)}{\kappa^*\psi'}$.
\end{proof}

In general, Theorem~\ref{thm:tapemap-entailment-transport} only requires a code translation
$\tr_\kappa:\Code\to\Code$ compatible with semantic reindexing.
In concrete language instances with an explicit tape interface, one often expects many computable
tape rewiring maps to be realizable in this sense. For stream tapes $R=2^{\Nat}$, typical examples
include bit flip, prefix dropping, even or odd splitting, and grouping into fixed size blocks.
More precisely, if the code language is closed under computable tape transformations, then such
maps admit corresponding translations $\tr_\kappa$ satisfying
Definition~\ref{def:realizable-tapemap}.
In stronger language instances, for example with a universal evaluator, one may
further ask that this translation be internally implemented by a code
$\mathsf{tmap}_\kappa\in\Code$. In that case, tape transport becomes an internal compiler pass:
given a code $e$, the program $\mathsf{tmap}_\kappa$ computes a rewired code $\tr_\kappa(e)$
whose behavior on $R$ tapes simulates that of $e$ on $R'$ tapes.

Although tape-level entailment is measure-agnostic, the extracted numeric reading depends on the chosen measure.
Thus, a tape map interacts with extraction by transporting measures.

\begin{definition}
    Given a measurable function $\kappa:(R,\Sigma)\to (R',\Sigma')$ and $\rho$,  a probability measure on $(R, \Sigma)$, i.e $\rho : \Sigma \to [0,1]$, define the pushforward measure on $(R', \Sigma')$ denoted as $\kappa_{\ast}\rho$,  by: 
    $
    \forall B \in \Sigma'. \kappa_{\ast}\rho(B) = \rho(\kappa^{-1}(B))
    $.
\end{definition}

\begin{lemma}[Extraction commutes with tape reindexing]\label{lem:extract-tape-pullback}
Let $\kappa:(R,\Sigma)\to(R',\Sigma')$ be measurable,  $\rho$ be a probability measure on $(R,\Sigma)$, and  $\rho' \defeq \kappa_{\ast}\rho$ be the pushforward measure on $(R',\Sigma')$.
Then for every measurable $\alpha':R'\to[0,1]$ we have
$
E_\rho(\kappa_\Omega(\alpha')) \;=\; E_{\rho'}(\alpha')
$.
\end{lemma}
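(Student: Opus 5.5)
This is the change-of-variables formula for pushforward measures. The plan is to prove it in the standard way: first for indicators, then for simple functions, then for all measurable $[0,1]$-valued functions by monotone approximation. Here $\kappa_\Omega(\alpha')$ denotes the reindexing $\kappa^\Omega(\alpha') = \alpha'\circ\kappa$.

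First we note that the left-hand side is well defined. Since $\kappa$ is measurable and $\alpha'$ is measurable into $[0,1]$ with its Borel $\sigma$-algebra, the composite $\alpha'\circ\kappa$ is a measurable map $R\to[0,1]$. It therefore lies in $\Omega_{\meas}$, and $E_\rho(\alpha'\circ\kappa)$ makes sense.

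\textbf{Indicators.} Take $\alpha'=\mathbf{1}_B$ with $B\in\Sigma'$. Then $\mathbf{1}_B\circ\kappa=\mathbf{1}_{\kappa^{-1}(B)}$, so
\[
E_\rho(\kappa^\Omega(\mathbf{1}_B))=\rho(\kappa^{-1}(B))=\kappa_\ast\rho(B)=E_{\rho'}(\mathbf{1}_B).
\]
This uses only the definition of the pushforward measure.

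\textbf{Simple functions.} Both sides are linear in $\alpha'$ with respect to finite nonnegative combinations. Indeed, $\kappa^\Omega$ is pointwise precomposition, so it commutes with sums and scalar multiples, and integration is linear. Hence the identity extends from indicators to every simple function $\sum_i a_i\mathbf{1}_{B_i}$ with $a_i\ge 0$ and $B_i\in\Sigma'$.

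\textbf{General measurable $\alpha'$.} Choose simple functions $s_n\uparrow\alpha'$ pointwise on $R'$, for example the dyadic truncations $s_n=\lfloor 2^n\alpha'\rfloor/2^n$. Then $s_n\circ\kappa\uparrow\alpha'\circ\kappa$ pointwise on $R$, and all these functions are measurable. Applying the monotone convergence theorem on both sides and passing to the limit in the simple-function identity gives $E_\rho(\kappa^\Omega(\alpha'))=E_{\rho'}(\alpha')$.

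The only step needing care is this last limit. One must check that the approximants stay measurable after precomposition, which follows from the measurability of $\kappa$. Boundedness in $[0,1]$ also means dominated convergence could be used instead. Everything else is bookkeeping with the definitions of $\kappa^\Omega$ and $\kappa_\ast\rho$.
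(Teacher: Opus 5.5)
Your proof is correct and follows the same route as the paper. The paper unfolds $E_\rho(\kappa^\Omega(\alpha'))$ to $\int_R \alpha'(\kappa(r))\,d\rho(r)$ and then cites the change-of-variables formula for pushforward measures, while you also prove that formula by the standard indicator, simple-function, monotone-convergence argument, and your added checks (measurability of $\alpha'\circ\kappa$ and of the dyadic approximants after precomposition) are sound.
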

Combining Lemma~\ref{lem:extract-tape-pullback} with the extraction soundness, the same evidence compilation along $\kappa$ that transports tape-level entailments also transports the corresponding extracted numeric bounds, provided we push the measure forward along $\kappa$.

We next specialize the tape space to infinite bit streams.
This fixes a canonical source of unbounded randomness and  provides a canonical measure-preserving splitting operation.
\section{Stream tapes and unbounded randomness}
\label{sec:tape:streams}

So far the tape evidenced frame has been developed for an arbitrary measurable tape space.
We now instantiate the tape space by infinite bit streams.
This serves two purposes: it supports unbounded consumption of randomness, and it provides a canonical, measure-preserving splitting map.
The splitting map is available semantically in the stream model, and, when it is realizable in the sense of Definition~\ref{def:realizable-tapemap}, it yields an independence discipline at the evidenced-frame level.
We will later combine this transport principle with the fact that splitting is measure-preserving in order to move between tape-level reasoning about independent draws and standard probabilistic statements.

Let
$
R \;=\; \{0,1\}^{\mathbb N}
$. 
For a finite bit word $w\in\{0,1\}^{<\mathbb N}$ write $[w]\subseteq R$ for the cylinder set of streams with prefix $w$.
Let $\Sigma$ be the $\sigma$-algebra generated by all cylinder sets, and let $\mu$ be the fair coin product measure,
uniquely determined by
$
\mu([w]) \;=\; 2^{-|w|}
$. 
This section takes the tape evidenced frame $\ef_\mathsf{tape}$ specialized to $(R,\Sigma,\mu)$.

\subsection{Why unbounded randomness matters: von Neumann unbiasing}

Many basic probabilistic procedures do not have a fixed a priori bound on how many random draws they will need.
A standard example is von Neumann's procedure for extracting an unbiased coin from a biased coin:
read the tape in pairs until seeing $01$ or $10$~\cite{VN}.
(Equivalently, repeat until the two bits differ.)
The point of this example is not primarily the classical fact that von Neumann's procedure produces a fair coin,
but to illustrate a proof pattern that is specific to our evidenced frame:
fairness is established by a  transport of evidence (a realizable tape map), and only then
turned into an ordinary numeric probability statement by extraction.

Let the observable outcomes be
$
\mathsf{Val} \defeq \{\mathsf{H},\mathsf{T}\}$. 
Define an extensional tape computation
$
\mathsf{vn}:R\to\mathsf{Val}_\bot
$
by scanning the tape in consecutive pairs $(r(2k),r(2k+1))_{k\in\mathbb N}$:
$\mathsf{vn}(r)=\mathsf{H}$ if the first pair $(r(2k),r(2k+1))$ with unequal bits is $01$,
$\mathsf{vn}(r)=\mathsf{T}$ if that first such pair is $10$,
and $\mathsf{vn}(r)=\bot$ if no such pair exists.
Because $\mathsf{Val}_\bot$ is discrete, $\mathsf{vn}$ is measurable iff each fiber is measurable. 
Indeed,
$\mathsf{vn}^{-1}(\{\mathsf{H}\})$ is the disjoint union over $k\in\mathbb N$ of the cylinders describing
$k$ initial nondecisive pairs in $\{00,11\}$ followed by the decisive pair $01$, and similarly for $\mathsf{T}$ and $\bot$.

Let $\mathsf{flip}:R\to R$ be the bit flip map
$
\mathsf{flip}(r)(n)\defeq 1-r(n)
$. 
Intuitively, $\mathsf{flip}$ exchanges the labels $\mathsf{H}$ and $\mathsf{T}$: flipping all bits swaps the decisive patterns
$01$ and $10$ while preserving the nondecisive patterns $00$ and $11$.
Formally, define $\mathsf{swap}:\mathsf{Val}_\bot\to\mathsf{Val}_\bot$ by
$$
\mathsf{swap}(\mathsf{H})\defeq\mathsf{T},
\qquad
\mathsf{swap}(\mathsf{T})\defeq\mathsf{H},
\qquad
\mathsf{swap}(\bot)\defeq\bot.
$$
Then for all $r\in R$,
$
\mathsf{vn}(\mathsf{flip}(r)) \;=\; \mathsf{swap}(\mathsf{vn}(r))
$.

Outcomes are observed using crisp postconditions.
Let $P_{\mathsf{H}}\defeq\{\mathsf{H}\}$ and $P_{\mathsf{T}}\defeq\{\mathsf{T}\}$, and let
$\mathbf{1}_{P_{\mathsf{H}}},\mathbf{1}_{P_{\mathsf{T}}}:\mathsf{Val}\to[0,1]$ be their indicators.
Write $(\cdot)^\sharp$ for the tape-invariant lift to predicates $\mathsf{Val}\to(R\to[0,1])$.
Consider the tape truth values
$$
\alpha_{\mathsf{H}}(r)\defeq
\Big(
\after{a}{\mathsf{vn}}{(\mathbf{1}_{P_{\mathsf{H}}})^\sharp(a)} 
\Big)(r),
\qquad
\alpha_{\mathsf{T}}(r)\defeq
\Big(
\after{a}{\mathsf{vn}}{(\mathbf{1}_{P_{\mathsf{T}}})^\sharp(a)} 
\Big)(r),
$$
which are simply the crisp tests for whether $\mathsf{vn}$ returns $\mathsf{H}$ or $\mathsf{T}$ on tape $r$
(with divergence contributing $0$ by definition of the modality).
The symmetry $\mathsf{vn}(\mathsf{flip}(r))=\mathsf{swap}(\mathsf{vn}(r))$ implies the pointwise identity
$
\alpha_{\mathsf{H}}(\mathsf{flip}(r)) \;=\; \alpha_{\mathsf{T}}(r)$ 
for all $r\in R$.
Equivalently,
$
\mathsf{flip}^\Omega(\alpha_{\mathsf{H}}) \;=\; \alpha_{\mathsf{T}}$.

Now assume $\mathsf{flip}$ is realizable, witnessed by the translation
$\tr_\mathsf{flip} : \Code \to \Code$.
Then Theorem~\ref{thm:tapemap-entailment-transport} transports any evidenced entailment along this tape symmetry:
whenever $e$ witnesses an entailment over the $\mathsf{flip}$-rewired specification, the translated code $\tr_\mathsf{flip}(e)$ witnesses the corresponding entailment over the original tape model.
Thus the tape symmetry is reflected not only at the level of truth values, but also at the level of uniform evidence.

Since $\mathsf{flip}$ is measure preserving for the fair product measure $\mu$, extraction commutes with reindexing:
for measurable $\beta:R\to[0,1]$,
$
E_\mu(\beta\circ\mathsf{flip})\;=\;E_\mu(\beta)
$.
Applying this to $\beta=\alpha_{\mathsf{H}}$ and using $\alpha_{\mathsf{H}}\circ\mathsf{flip}=\alpha_{\mathsf{T}}$ gives
$
E_\mu(\alpha_{\mathsf{H}})\;=\;E_\mu(\alpha_{\mathsf{T}})
$.
If moreover $\mathsf{vn}$ terminates $\mu$-almost surely (a standard fact), then $E_\mu(\alpha_{\mathsf{H}})+E_\mu(\alpha_{\mathsf{T}})=1$,
hence each equals $1/2$.
In other words, the classical fairness conclusion is obtained by extracting a numeric statement from an evidenced frame level symmetry, 
 while realizability of $\mathsf{flip}$ ensures that the corresponding evidence transports uniformly across the rewiring.

\subsection{A splitting discipline for independence}

The tape monad threads a single tape through sequential composition.
This makes correlation explicit, but it also means that independence is not automatic:
two subcomputations executed in sequence will, by default, read from the same tape.
To model independent draws, it is convenient to transport the resulting entailments back
to one tape along a realizable splitting map, see, e.g.,~\cite{DeterminisitcStreamSample2023}.

Fix the stream tape measure $\mu$.
Any tape computation $m:R\to X_\bot$ induces an extensional \emph{law} on outcomes by pushforward:
for measurable $A\subseteq X_\bot$ define
$
  \mathsf{law}(m)(A) \;\defeq\; \mu\big(m^{-1}(A)\big)
$. 
When $X_\bot$ is discrete, we may equivalently write $\mathsf{law}(m)(z)=\Pr_{r\sim\mu}[m(r)=z]$.
Intuitively, $\mathsf{law}(m)$ is what an extensional probabilistic semantics remembers about $m$:
it forgets \emph{which} tapes lead to which outcomes and records only their probabilities.

Sequential composition in the tape monad threads the \emph{same} tape through the continuation, so correlation is explicit.
As a result, taking laws does not in general commute with sequential composition:
running $m$ and then $f$ against the same tape is not the same as running them with \emph{fresh independent} randomness at each step.
A standard way to enforce independent sequencing \emph{inside} a tape semantics is therefore to equip tapes with a splitting map,
and to use it as an explicit independence discipline.

Define the even/odd splitting map
$
\mathsf{split}:R\to R\times R
$
by
$$
\mathsf{split}(r) \;=\; (r_{\mathsf{even}},r_{\mathsf{odd}}),
\qquad
r_{\mathsf{even}}(n)\defeq r(2n),
\qquad
r_{\mathsf{odd}}(n)\defeq r(2n+1).
$$

\begin{lemma}
\label{lem:split-product}
The pushforward of $\mu$ along $\mathsf{split}$ is the product measure $\mu\times\mu$ on $R\times R$.
In particular, $r_{\mathsf{even}}$ and $r_{\mathsf{odd}}$ are independent and each has distribution $\mu$.
\end{lemma}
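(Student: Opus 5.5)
The plan is to prove the equality of measures $\mathsf{split}_\ast\mu = \mu\times\mu$ by checking it on a $\pi$-system that generates the product $\sigma$-algebra $\Sigma\otimes\Sigma$, and then appealing to the $\pi$–$\lambda$ (Dynkin) uniqueness theorem for finite measures. The natural $\pi$-system is the family of measurable rectangles of cylinders $[u]\times[v]$ with $u,v\in\{0,1\}^{<\mathbb N}$, together with $\emptyset$. This family is closed under finite intersections, since the intersection of two cylinders is either empty or the cylinder of the longer word. It generates $\Sigma\otimes\Sigma$ because cylinders generate $\Sigma$ on each factor.

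First, measurability of $\mathsf{split}$: it suffices to check that the preimage of every generating rectangle $[u]\times[v]$ lies in $\Sigma$. This falls out of the main computation anyway. Next comes the key computation. For $u$ of length $k$ and $v$ of length $\ell$,
\[
\mathsf{split}^{-1}([u]\times[v]) = \{\, r \mid r(2i)=u_i \text{ for } i<k,\ r(2j+1)=v_j \text{ for } j<\ell \,\}.
\]
This set constrains $k+\ell$ distinct coordinates, since even and odd positions are disjoint. It is therefore a finite disjoint union of cylinders $[w]$ with $|w| = N \defeq \max(2k,2\ell+1)$, where the remaining $N-k-\ell$ free positions below $N$ range over all $2^{N-k-\ell}$ bit patterns. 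Its measure is
\[
2^{N-k-\ell}\cdot 2^{-N} = 2^{-k}\cdot 2^{-\ell} = \mu([u])\,\mu([v]) = (\mu\times\mu)([u]\times[v]).
\]
Both $\mathsf{split}_\ast\mu$ and $\mu\times\mu$ are probability measures agreeing on this $\pi$-system, so they coincide on $\Sigma\otimes\Sigma$.

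The ``in particular'' clause then follows directly. The marginals of $\mu\times\mu$ are $\mu$, and $r_{\mathsf{even}} = \pi_1\circ\mathsf{split}$, so the law of $r_{\mathsf{even}}$ is $(\pi_1)_\ast(\mu\times\mu)=\mu$, and likewise for $r_{\mathsf{odd}}$. The joint law is a product, which is exactly independence: $\mu(r_{\mathsf{even}}\in A,\ r_{\mathsf{odd}}\in B) = \mu(A)\mu(B)$ for all $A,B\in\Sigma$.

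The main obstacle is only bookkeeping: counting the free coordinates in the preimage when $2k$ and $2\ell+1$ differ. An alternative that avoids this is to view $\mu$ as the product of countably many fair Bernoulli measures and note that $\mathsf{split}$ is just a reindexing bijection $\mathbb N\cong \mathbb N\sqcup\mathbb N$ of coordinates. I would fall back on this if the counting becomes messy.
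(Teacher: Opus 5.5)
Your proposal is correct and takes the same approach as the paper: check that $\mathsf{split}_\ast\mu$ and $\mu\times\mu$ agree on cylinder rectangles $[u]\times[v]$, then conclude by uniqueness. You also fill in details the paper leaves out, namely the $\pi$--$\lambda$ uniqueness step, measurability of $\mathsf{split}$, and describing $\mathsf{split}^{-1}([u]\times[v])$ correctly as a finite disjoint union of length-$N$ cylinders, where the paper loosely calls it a single cylinder of length $|u|+|v|$.
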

Using $\mathsf{split}$ we can define a derived sequencing operation on tape computations that enforces fresh randomness.
Given $m:R\to X_\bot$ and $f:X\to (R\to Y_\bot)$, define
$$
  (m \kleisli_{\mathsf{split}} f)(r) \;\defeq\;
  \text{let } (r_1,r_2)=\mathsf{split}(r)\ \text{in}\ 
  \begin{cases}
    \bot & \text{if } m(r_1)=\bot,\\
    f(x)(r_2) & \text{if } m(r_1)=x.
  \end{cases}
$$

\begin{proposition}
\label{prop:law-split-seq}
For all measurable $B\subseteq Y_\bot$,
$$
  \mathsf{law}(m \kleisli_{\mathsf{split}} f)(B)
  \;=\;
  \int_{X_\bot} \mathsf{law}(f(x))(B)\, d\mathsf{law}(m)(x).
$$
In particular, split sequencing realizes the intended ``fresh independent randomness'' interpretation of sequential composition.
\end{proposition}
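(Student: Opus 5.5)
The plan is to compute the law as a pushforward of $\mu\times\mu$ via Lemma~\ref{lem:split-product} and then apply Fubini. The split-sequenced computation factors as $(m\kleisli_{\mathsf{split}} f) = g\circ \mathsf{split}$, where $g:R\times R\to Y_\bot$ is given by
$$
g(r_1,r_2)\defeq
\begin{cases}
\bot & \text{if } m(r_1)=\bot,\\
f(x)(r_2) & \text{if } m(r_1)=x.
\end{cases}
$$
Then
$$
\mathsf{law}(m\kleisli_{\mathsf{split}} f)(B)=\mu(\mathsf{split}^{-1}(g^{-1}(B)))=(\mathsf{split}_\ast\mu)(g^{-1}(B))=(\mu\times\mu)(g^{-1}(B)),
$$
where the last equality is Lemma~\ref{lem:split-product}.

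Before this step we need measurability of $g$. Since $X$ is countable and $X_\bot$ is discrete, we have
$$
g^{-1}(B)=\big(m^{-1}(\bot)\times R\big)\cap[\bot\in B]\ \cup\ \bigcup_{x\in X} m^{-1}(x)\times f(x)^{-1}(B).
$$
This is a countable union of measurable rectangles, so it is measurable in $\Sigma\otimes\Sigma$. Here $[\bot\in B]$ means the first set is included only when $\bot\in B$.

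Next, apply Tonelli/Fubini to the indicator of $g^{-1}(B)$, integrating over $r_2$ first. For fixed $r_1$ with $m(r_1)=x\in X$, the section $\{r_2 \mid g(r_1,r_2)\in B\}$ is $f(x)^{-1}(B)$, whose $\mu$-measure is $\mathsf{law}(f(x))(B)$. For $m(r_1)=\bot$ the section is $R$ or $\emptyset$, according to whether $\bot\in B$. Hence
$$
(\mu\times\mu)(g^{-1}(B))=\int_R h(m(r_1))\,d\mu(r_1),
$$
where $h(x)=\mathsf{law}(f(x))(B)$ for $x\in X$ and $h(\bot)=\mathbf 1_B(\bot)$. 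By change of variables along $m$, this equals $\int_{X_\bot} h\,d\mathsf{law}(m)$, which, since $X_\bot$ is countable, is the sum $\sum_z h(z)\,\mathsf{law}(m)(\{z\})$.

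The main obstacle is the treatment of $\bot$, since $f$ is defined only on $X$ while the stated integral ranges over $X_\bot$. I will adopt the convention that $f$ is extended by $f(\bot)\defeq$ the constantly-$\bot$ computation, that is, $\mathsf{law}(f(\bot))=\delta_\bot$. This matches the strictness of $\kleisli_{\mathsf{split}}$ and makes $h(\bot)=\mathsf{law}(f(\bot))(B)$, so the formula holds verbatim. I will state this convention explicitly in the proof. The remaining steps are routine, because every space involved is discrete or a product of countably many cylinder-generated pieces.
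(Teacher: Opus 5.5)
Your proof is correct and follows the same route as the paper. You factor through $\mathsf{split}$ and use Lemma~\ref{lem:split-product} to replace $\mu$ by $\mu\times\mu$, then separate the $r_1$-part (governed by $m$) from the $r_2$-part (governed by $f$) — the paper's ``law of total probability'' step, which you make rigorous via Tonelli on sections and change of variables along $m$, adding the measurability of $g$ (countable union of measurable rectangles) and the convention that $f(\bot)$ is the constantly-$\bot$ computation so the integrand over $X_\bot$ is defined, both of which the paper leaves implicit.
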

\begin{proof}
Expand the LHS as $\mu(\{r \mid (m\kleisli_{\mathsf{split}} f)(r)\in B\})$ and rewrite it using the definition of
$\kleisli_{\mathsf{split}}$.
Apply Lemma~\ref{lem:split-product} to replace sampling $r\sim\mu$ with sampling $(r_1,r_2)\sim \mu\times\mu$,
and then use the law of total probability (equivalently, conditional expectation) to separate the $m$-part (on $r_1$) from the $f$-part (on $r_2$).
\end{proof}

To use split as an independence discipline inside the evidenced frame, we apply the general tape map transport theorem
of Section~\ref{sec:tape:tmap}.
Let $R^{(2)}\defeq R\times R$ with product $\sigma$-algebra and product measure $\mu\times\mu$.
The two tape evidenced frame uses the same code set $\Code$, but application now returns computations over $R^{(2)}$:
for $e,c\in\Code$,
$
(e\cdot_{R^{(2)}} c) : R^{(2)} \to \Code_\bot
$. 
Precomposition along $\mathsf{split}:R\to R^{(2)}$ then rewires such a two tape computation into a one tape computation by feeding it
the pair $(r_{\mathsf{even}},r_{\mathsf{odd}})$.

The measurable map $\mathsf{split}$ is canonical at the semantic level, but its realizability is
an additional assumption on the code language rather than a consequence of the abstract tape
semantics.
Assume therefore that $\mathsf{split}$ is realizable, witnessed by a   translation 
$\tr_{\mathsf{split}}:\Code\to \Code$.
Then Theorem~\ref{thm:tapemap-entailment-transport} yields, for all predicates
$\varphi',\psi':A\to\Omega_{R^{(2)}}$ and all $e\in A$,
$$\ef_{R^{(2)}}\vdash \entails{\varphi'}{e}{\psi'}
\quad\Longrightarrow\quad
\ef_R\vdash \entails{\mathsf{split}^*\varphi'}{\tr_{\mathsf{split}}(e)}{\mathsf{split}^*\psi'}.$$
Thus, an entailment proved via an interface that assumes access to two independent tapes can be transported into an
entailment for a single tape by 
reindexing predicates along
$\mathsf{split}$ and translating evidence accordingly.
Lemma~\ref{lem:split-product} ensures that, after  probabilities extraction, this transport matches the intended
semantics of independent randomness under the product measure.

\begin{example}[Error reduction by majority]
This example shows how our splitting discipline supports the standard ``repeat and take majority'' amplification argument.
Since tape-based repetition reuses the same tape, independence is not automatic, so we build the majority-of-$k$ verifier on $k$ split sub-tapes (to enforce independence), transport the construction back to a single-tape program using realizability of splitting, and read off the usual probability improvement by expectation extraction.

Fix a tape probability space $(R,\Sigma,\rho)$ and the induced evidenced frame $\ef_R$.
Let $P\subseteq \Code_R$ be a crisp set of accepting outputs, and write $\psi_P(c)(r)\defeq \mathbf{1}_P$.
Let $e\in\Code_R$ be a (tape-dependent) randomized verifier, i.e., a code that may consult the tape when run.

For $k\ge 1$ let $R^{(k)}\defeq R^k$ with the product measure $\rho^{\otimes k}$ and the induced evidenced frame $\ef_{R^{(k)}}$.
Let $e_{k,t}\in \Code_{R^{(k)}}$ be the code that runs the input code $k$ times (once per tape component) and accepts if at least $t$ of the runs accept.
Define the threshold predicate
$$
\varphi^{(k)}_{\ge t}(c)(r_1,\dots,r_k)
\;\defeq\;
\mathbf{1}_{\big[\#\{\,i \mid (e\cdot c)(r_i)\in P\,\}\ge t\big]}.
$$
By the threshold construction, we have an evidenced entailment in $\ef_{R^{(k)}}$:
$$
\ef_{R^{(k)}}\vdash
\entails{\varphi^{(k)}_{\ge t}}{e_{k,t}}{\psi_P},
$$

Now let $\kappa\defeq \mathsf{split}_k:R\to R^k$ be the $k$-ary splitting map,
$
\mathsf{split}_k(r)\defeq$ $(r_0, ..., r_{k-1})
$, where $r_i(n) = r(k \cdot n + i)$.
Assume $\mathsf{split}_k$ is realizable, with associated evidence translation $
\tr_k:\Code\to \Code$. 
Then Theorem~\ref{thm:tapemap-entailment-transport} yields the single-tape entailment
$$\ef_R\vdash
\entails{\mathsf{split}_k^*\varphi^{(k)}_{\ge t}}{\tr_k(e_{k,t})}{\mathsf{split}_k^*\psi_P}.$$

The logical entailment is tape-by-tape, and we induce it with extracted probabilistic reading.
Soundness of extraction gives
$
E_\rho\!\bigl(\mathsf{split}_k^*\varphi^{(k)}_{\ge t}(c)\bigr)
\;\le\;
E_\rho\!\Bigl(
\bigl\langle \Diamond a \leftarrow (\tr_k(e_{k,t})\cdot c)\bigr\rangle
\psi_P(a)
\Bigr)
$ for any $c\in\Code_R$. 
Unfolding $\mathsf{split}_k^{*}$ and the fact that $\mathsf{split}_k$ pushes $\rho$ forward to $\rho^{\otimes k}$ obtains
$
E_\rho\big(\mathsf{split}_k^{*}\varphi^{(k)}_{\ge t}(c)\big)
=
E_{\rho^{\otimes k}}\big(\varphi^{(k)}_{\ge t}(c)\big),
$ 
  hence
$
E_\rho\!\Big(
\after{a}{(\tr_k(e_{k,t})\cdot c)}{\psi_P(a)}
\Big)
\ \ge\
E_{\rho^{\otimes k}}\big(\varphi^{(k)}_{\ge t}(c)\big)
$.

As a concrete instance, consider the majority of three by taking  $k=3$ and $t=2$.
Let
$
p_c \;\defeq\; E_\rho\!\Big(
\after{a}{(e \cdot c)}{\psi_P(a)}
\Big)
$ 
be the one-shot acceptance probability.
Then
$
E_{\rho^{\otimes 3}}\big(\varphi^{(3)}_{\ge 2}(c)\big)
=
3p_c^2-2p_c^3
$ is 
the usual majority amplification, which strictly improves error when $p_c>\tfrac12$.
Thus, by combining splitting (to get independent runs), realizability of the corresponding tape map, and
expectation extraction, we obtain an evidence-tracked derivation of error reduction.
\end{example}

\section{Finitary distributions as an abstraction of tapes}
\label{sec:dist-final}

The tape-based evidenced frame exposes correlations by reusing the same tape and supports evidence compilation along realizable tape maps.
It is nevertheless useful to record a coarser view that forgets the tape and keeps only the \emph{law} (output distribution) of a computation.
This section packages that view as a finitary distribution semantics and relates it back to the tape evidenced frame.
Because the tape model is naturally quotiented by almost-sure equality, the fragment that survives most cleanly without expectation is the probability-one fragment, so we work with Boolean truth values and the corresponding `must' modality.

\subsection{A finitary distribution monadic core}

For countable $X$, let $\mathcal{D}_{\mathrm{fin}}(X\bot)$ be the set of finitely supported distributions on $X_\bot$:
$$
\mathcal{D}_{\mathrm{fin}}(X_\bot)
\;=\;
\Big\{\,\mu:X_\bot\to[0,1]\ \Big|\ \mathrm{supp}(\mu)\ \text{finite and}\ \sum_{z\in X_\bot}\mu(z)=1\,\Big\}.
$$
The unit is the Dirac distribution, and Kleisli bind is mixture:
$
(\mu \kleisli f)(w) \defeq \sum_{z\in X_\bot} \mu(z)\cdot f(z)(w)
$. 
We take $\Omega_{\mathrm{bool}}\defeq\{0,1\}$ with the usual order, hence a complete Heyting algebra.
Predicates $\varphi:X_\bot\to\Omega_{\mathrm{bool}}$ are characteristic functions of subsets of $X_\bot$.

\begin{definition}
\label{def:must-modality}
Define $\diamondop_{must}$ by
$$
\aftermust{x}{\mu}{\varphi(x)}=1
\quad\Longleftrightarrow\quad
\forall z\in X_\bot.\ \mu(z)>0\ \Rightarrow\ \varphi(z)=1.
$$
\end{definition}
Equivalently, $\aftermust{x}{\mu}{\varphi(x)}=1$ iff $\mu(\{z\mid \varphi(z)=1\})=1$.
Thus must expresses support-based safety: a postcondition holds iff it holds for all outcomes that may occur with nonzero probability.
In particular, excluding $\bot$ expresses almost-sure termination.

\begin{lemma}
\label{lem:must-modality}
The operator $\diamondop_{must}$ is an $M$-modality.
\end{lemma}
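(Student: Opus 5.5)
The plan is to check the three axioms of Definition~\ref{def:m-modality}, together with naturality in $X$, directly from the support characterisation in Definition~\ref{def:must-modality}. First I fix the treatment of divergence, which the statement leaves implicit. A postcondition $\varphi : X \to \Omega_{\mathrm{bool}}$ is extended to $X_\bot$ by setting $\varphi(\bot) \defeq 0$. This matches the remark that must excludes $\bot$ exactly when it expresses almost-sure termination. Kleisli maps $f : X \to \mathcal{D}_{\mathrm{fin}}(Y_\bot)$ are likewise extended by $f(\bot) \defeq \delta_\bot$, as in the tape monad. With these conventions,
\[
\aftermust{x}{\mu}{\varphi(x)} = 1 \iff \mathrm{supp}(\mu) \subseteq \{z \in X \mid \varphi(z) = 1\}.
\]
Every axiom then reduces to a statement about supports.

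The one structural fact I need is the support of a mixture:
\[
\mathrm{supp}(\mu \kleisli f) = \bigcup_{z \in \mathrm{supp}(\mu)} \mathrm{supp}(f(z)).
\]
This holds because $(\mu \kleisli f)(w) = \sum_z \mu(z) f(z)(w)$ is a finite sum of nonnegative terms. Such a sum is positive iff some term is positive, i.e.\ iff some $z$ has $\mu(z) > 0$ and $f(z)(w) > 0$. For naturality, the same reasoning gives $\mathrm{supp}(\mathcal{D}_{\mathrm{fin}}(g)(\mu)) = g_\bot(\mathrm{supp}(\mu))$ for any $g : X \to Y$. From this, the must value of $\mathcal{D}_{\mathrm{fin}}(g)(\mu)$ at $\varphi$ equals the must value of $\mu$ at $\varphi \circ g$, using $\varphi(\bot) = 0$ on both sides.

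The axioms then follow in order.

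\emph{After-Return.} Since $\mathrm{supp}(\delta_a) = \{a\}$ with $a \in X$, the must value of $\delta_a$ is exactly $\varphi(a)$, so the required inequality is in fact an equality.

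\emph{After-Bind.} Since $\Omega_{\mathrm{bool}}$ is two-valued, it suffices to assume the left-hand side equals $1$. Then every $z \in \mathrm{supp}(\mu)$ satisfies the inner must condition at value $1$. In particular $z \neq \bot$, because the inner condition, viewed as a predicate on $X_\bot$, is $0$ at $\bot$. So $z \in X$ and every $w \in \mathrm{supp}(f(z))$ satisfies $\varphi(w) = 1$. By the support formula, every $w \in \mathrm{supp}(\mu \kleisli f)$ then satisfies $\varphi(w) = 1$.

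\emph{Internal monotonicity.} If $\bigwedge_x (\varphi_1(x) \Rightarrow \varphi_2(x)) = 0$, there is nothing to prove. Otherwise $\varphi_1 \le \varphi_2$ pointwise on $X$, and this also holds at $\bot$. The set $\{\varphi_1 = 1\}$ is then contained in $\{\varphi_2 = 1\}$, so the must value at $\varphi_1$ is at most the must value at $\varphi_2$. Hence the Boolean implication between them equals $1$.

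I expect the only real obstacle to be the bookkeeping around $\bot$. The statement writes $\varphi$ on $X_\bot$ while the axioms quantify over predicates on $X$. I must make sure the $\bot$ conventions are used consistently: in After-Bind to discard $z = \bot$, and in the naturality check so that $g_\bot$ fixes $\bot$. Everything else is elementary, because finiteness of supports makes the mixture-support identity immediate with no measure-theoretic subtleties.
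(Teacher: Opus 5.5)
Your proposal is correct and follows the same route as the paper. After-Return comes from the support of the Dirac distribution, After-Bind from the support-of-a-mixture identity, and Internal Monotonicity is immediate from two-valuedness. The paper leaves implicit the $\bot$ conventions ($\varphi(\bot)=0$, $f(\bot)=\delta_\bot$) and the naturality check, and you supply both explicitly.
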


Note that we can replace $\Omega_{\mathrm{bool}}$ by the complete Heyting chain $\Omega=[0,1]$ (with G\"odel implication) and keep $\mathcal{D}_{\mathrm{fin}}$.
The induced must modality is then the support-infimum transformer
$
\aftermust{x}{\mu}{\varphi(x)}
\;\defeq\;
\inf\{\,\varphi(z)\mid \mu(z)>0\,\}$.
This is an $M$-modality, but it is best read as demonic or safety-style quantitative refinement (a worst-case score across possible outcomes), not as a probability bound.
Since our tape model focuses on almost-sure extensionality, we use the Boolean must fragment as the canonical distribution-level abstraction and leave other quantitative choices to future work.

\begin{definition}
Let $\ef_{\mathrm{dist}}$ denote the evidenced frame induced by the monadic core $\MC \;=\; (\Code,\Omega_{\mathrm{bool}},\diamondop_{must})$ over the monad $\mathcal{D}_{\mathrm{fin}}$.
\end{definition}
$\ef_{\mathrm{dist}}$ provides a compact law-level evidenced frame whose propositions range over deterministic codes and whose entailments express support-based (probability-one) properties.
Thus, the proposition fiber is $\Phi_{\mathrm{dist}}\defeq\Code\to\Omega_{\mathrm{bool}}$, so predicates range over deterministic codes.
Entailment has the usual uniform refinement form:
$$
  \entails{\varphi}{e}{\psi}
  \quad\Longleftrightarrow\quad
  \forall c\in\Code.\ 
  \varphi(c)=1
  \Rightarrow
  \aftermust{a}{(e \cdot_0 c)}{\psi(a)}=1.
$$
Thus predicates express support-based properties such as safety, invariants over possible outputs, and probability-one termination (by excluding $\bot$).

Note that one can also take codes themselves to be finitary distributions.
Let $\Code^* \defeq \mathcal{D}_{\mathrm{fin}}(\Code)$ and suppose we have an MCA structure with application
$$
  \cdot_*:\Code^*\times \Code^* \to \mathcal{D}_{\mathrm{fin}}(\Code^*_\bot)
  \;=\;
  \mathcal{D}_{\mathrm{fin}}\big(\mathcal{D}_{\mathrm{fin}}(\Code)_\bot\big).
$$
Equipping this MCA with the same $\Omega_{\mathrm{bool}}$ and $\diamondop_{must}$ yields an alternative evidenced frame
whose propositions are predicates $\Code\to\Omega_{\mathrm{bool}}$ on internal laws.
Although truth values remain Boolean, such predicates can express numerical constraints on the internal distribution carried by a code.
For $E\subseteq\Code$ and threshold $p\in[0,1]$, define
$
  P_{E,p}(\nu)=1
  \quad\Longleftrightarrow\quad
  \sum_{x\in E}\nu(x)\ge p
$.

\subsection{From tapes to finitary laws}
We now relate the finitary distribution view to the tape model.
Fix a tape probability space $(R,\Sigma,\rho)$.
For any measurable tape computation $m:R\to X_\bot$, define its law, i.e., its pushforward output distribution under $\rho$,
$\mathsf{push}(m):X_\bot\to[0,1]$ by
$$
\mathsf{push}(m)(z) \defeq \rho(\{\,r\in R \mid m(r)=z\,\}).
$$
This forgets which tapes lead to which outcomes, keeping only their probabilities.
If $\mathsf{push}(m)$ has finite support then $\mathsf{push}(m)\in\mathcal{D}_{\mathrm{fin}}(X_\bot)$.

If $m,n:R\to X_\bot$ are equal $\rho$-almost surely, then $\mathsf{push}(m)=\mathsf{push}(n)$.
Equality in law is strictly coarser than almost-sure equality: it identifies computations with the same output distribution,
even when they differ on how they consume and reuse randomness.

For finitary laws, probability-one validity coincides with a must-style, support-based reading.
Given $P\subseteq X_\bot$ and $\mu\in\mathcal{D}_{\mathrm{fin}}(X_\bot)$, write
$$
\mu \models_{\must} P \quad:\!\!\Longleftrightarrow\quad \forall z\in X_\bot.\ \mu(z)>0 \Rightarrow z\in P.
$$
Equivalently, $\mu \models_{\must} P$ iff $\sum_{z\in P}\mu(z)=1$.

This gives a clean interface from tape-based reasoning to the finitary law view:
whenever a tape-level argument establishes that a crisp postcondition $P$ holds with probability $1$
(for instance via the almost-sure quotient and extraction), the induced finitary law $\mathsf{push}(m)$ satisfies
the must judgment $\mathsf{push}(m)\models_{\must} P$.
Conversely, must judgments about $\mathsf{push}(m)$ are stable under the almost-sure quotient.

\section{Conclusion}\label{sec:conc}

We developed an evidence-tracked, tape-based semantics for probabilistic computation, where randomized programs consume an explicit random tape and sequencing reuses it, making correlation explicit.
Entailments are uniform tape-by-tape refinements witnessed by evidence transformers.
After fixing a tape measure, probabilistic content is recovered by extraction and soundness, and we quotient truth values by almost-sure equality to obtain a probability-one fragment.
We also identify two intensional principles lost at the level of laws: proof-relevant transport along realizable tape rewiring maps and a canonical splitting discipline for stream tapes.
Finally, we relate tape reasoning to a probability-one must abstraction via pushforward.

Several lines of work make randomness explicit through streams, tapes, or random variables, including deterministic stream sampling~\cite{deterministicStream2023} and tape-based denotational accounts of probabilistic programming~\cite{Commutative2017Staton}. 
We share this randomness-as-input perspective, which is particularly useful for higher-order probabilistic computation where measure-theoretic models face well-known structural difficulties, such as the failure of the category $\Meas$ to be cartesian closed and the resulting use of quasi-Borel spaces and measurable cones~\cite{qbs,MeasurableCones2017}. 
However, our goal and interface are different: we use realizability and evidenced frames to obtain proof-relevant entailments, so reasoning steps carry evidence and support transport along realizable tape maps and splitting principles for independence. 
Extensional probabilistic claims are extracted only later, after fixing a tape measure and passing to laws. 
Compared with probabilistic Hoare, expectation-based, relational, and outcome-oriented logics~\cite{Barthe2013,McIverMorgan05,OutcomeLogic,zilberstein2026probabilistic}, our approach is more intensional: outcome-style reasoning appears as a Boolean fragment, while richer truth values support quantitative evidence-tracked reasoning.

Several extensions are natural.
On the semantic side, we currently restrict to countable outcome types so that measurability of Kleisli composition follows from countable unions. Extending the tape monad to categories of measurable spaces such as $\Meas$ or standard Borel spaces~\cite{Aumann1961,qbs,MeasurableCones2017} would allow richer outputs while preserving extraction and almost sure reasoning.
Our transport theorems isolate realizability of a tape map as the key condition ensuring that entailments and evidence compile along tape rewiring, and it would be valuable to characterize this condition in concrete language instances.
At the extensional level, we singled out the probability one must fragment as the abstraction commuting with the almost sure quotient. A natural next step is to identify the largest law-level fragment soundly obtainable from tape proofs, possibly via extraction, and to understand when such law-level principles can be reflected back into tape refinements under additional structure.
Finally, although the present paper is semantic in focus, the induced tripos also points toward a proof relevant higher-order proof theory: the non-tape fragment should follow existing evidence tracked systems for evidenced frames, e.g.~\cite{effhol}, while the tape-specific content would appear as additional admissible principles for transport and splitting.
Developing such a syntactic account for a concrete tape access language, together with soundness and completeness results, is left for future work.

\clearpage

\bibliographystyle{plainurl}
\bibliography{refs}

\clearpage
\appendix
\section{Appendix}

This appendix contains proofs for the paper.

\begin{proof}[Proof of Lemma~\ref{lem:reader-monad}]
$\unit_X(x)$ is constant, hence measurable.
If $m$ and all $f(x)$ are measurable, then $(m\kleisli f)^{-1}(\{y\})$ is a union of sets of the form
$m^{-1}(\{x\})\cap f(x)^{-1}(\{y\})$, hence measurable.
The monad laws are the usual reader-monad equalities.
\end{proof}

\begin{proof}[Proof  of Lemma~\ref{lem:diamop-modality}]
All three axioms hold pointwise in $r\in R$.
After-Return holds because $\unit_X(x)(r)=x$.
After-Bind holds because both sides case split on whether $m(r)=\bot$ or $m(r)=x$.
Internal Monotonicity holds because the Heyting structure on $\OmegaTV$ is pointwise:
it reduces to the corresponding inequality in $\Omega_0=[0,1]$ for each fixed tape $r$.
\end{proof}

\begin{proof}[Proof of Lemma~\ref{lem:diamond-meas}]
For $q\in\mathbb Q$,
\[
\{r\mid \after{x}{m}{\psi(x)}(r)>q\}
=\bigcup_{x\in X}\Big(m^{-1}(\{x\})\cap \{r\mid \psi(x)(r)>q\}\Big),
\]
a countable union of measurable sets.
\end{proof}

\begin{proof}[Proof of Lemma~\ref{lem:modality-respects-as}]
For each $x\in X$ let $N_x\defeq \{\,r\in R \mid \psi(x)(r)\neq \psi'(x)(r)\,\}$, so $\rho(N_x)=0$.
Since $X$ is countable, $N\defeq \bigcup_{x\in X}N_x$ is measurable and $\rho(N)=0$.
Fix $r\notin N$. If $m(r)=\bot$, then both sides evaluate to $0$ at $r$.
If $m(r)=x$, then $\psi(x)(r)=\psi'(x)(r)$ by choice of $N$, hence the two sides agree at $r$.
Thus the two resulting truth values agree outside a $\rho$-null set, i.e.\ are $\equiv_{\mathrm{a.s.}}$.    
\end{proof}

\begin{proof}[Proof of Lemma~\ref{prop:quotient-modality-axioms}]
Choose measurable representatives throughout, and use that the corresponding laws hold \emph{pointwise in $r$} for $\diamond$ on $\Omega_{\meas}$.
Monotonicity follows by collecting the countably many null sets witnessing $\psi(x)\le\chi(x)$ and arguing as in Lemma~\ref{lem:modality-respects-as}.
Unit and bind are equalities of measurable functions $R\to[0,1]$ holding for every $r$, hence their classes are equal in $\Omega_{\mathrm{a.s.}}$.
\end{proof}

\begin{proof}[Proof of Lemma~\ref{lem:kappa-commutes-modality}]
Both sides are equal as functions $R\to[0,1]$.
If $m(\kappa(r))=\bot$ then both sides are $0$ and if $m(\kappa(r))=x\in X$ then both sides are $\psi'(x)(\kappa(r))$.
\end{proof}

\begin{proof}[Proof of Lemma~\ref{lem:extract-tape-pullback}]
Unfolding LHS results in
$$
    E_\rho(\kappa_\Omega(\alpha')) = \int_{R} \kappa_\Omega(\alpha'(r) d\rho(r) = \int_{R} \alpha'(\kappa(r)) d\rho(r)
$$
By definition of $\rho' = \kappa_{\ast}\rho$ (i.e $\rho'(B) = \rho(\kappa^{-1}(B)$ for all $B \in \Sigma'$) and by change-of-variables formula for pushforward measures (equivalently, the defining property of $\kappa_*\rho$), the above equals to
$$
\int_{R'} \alpha'(r')\,d\rho'(r') = E_{\rho'}(\alpha')
$$
Hence $ E_\rho(\kappa_\Omega(\alpha')) = E_{\rho'}(\alpha')$. as required.
\end{proof}

\begin{proof}[Proof of Lemma~\ref{lem:split-product}]
It suffices to check equality on cylinder rectangles $[u]\times[v]$.
The preimage $\mathsf{split}^{-1}([u]\times[v])$ constrains $|u|$ even positions and $|v|$ odd positions of $r$,
so it is a cylinder of length $|u|+|v|$ and has measure $2^{-(|u|+|v|)}$, which equals $\mu([u])\mu([v])$.
\end{proof}

\begin{proof}[Proof of Lemma~\ref{lem:must-modality}]
After-Return: $\delta_x$ has support $\{x\}$, so the must condition reduces to $\varphi(x)=1$.
After-Bind: an outcome $w$ has positive probability under $\mu\kleisli f$ iff there exists $z$ with $\mu(z)>0$
and $f(z)(w)>0$, so the must condition for $\mu\kleisli f$ is equivalent to:
for all $z$ in the support of $\mu$, the must condition holds for $f(z)$.
Internal Monotonicity is immediate for Boolean truth values.
\end{proof}

\end{document}